\documentclass[10pt,journal]{IEEEtran}
\usepackage{multicol}
\usepackage{amsmath,epsfig,comment}
\usepackage{amsthm}
\usepackage{tcolorbox}
\usepackage{amssymb}
\usepackage{multirow}
\usepackage{mathrsfs}
\usepackage{amsmath}
\usepackage{arydshln}
\usepackage{multirow}
\usepackage{arydshln}

\usepackage{cases} 
\usepackage{amssymb,amsmath,cite}
\usepackage{epsfig}
\usepackage{color}
\usepackage{bm}
\usepackage{graphicx,subfigure}
\usepackage{algorithm}
\usepackage{algorithmic}
\usepackage{multirow}
\usepackage{soul}
\usepackage{geometry}
\usepackage{extarrows}

\def\bd{\mathbf{d}}
\def\bX{\mathbf{X}}
\def\bB{\mathbf{B}}

\def\d{\mathbf{d}}

\def\R{\mathbb{R}}

\def\C{\mathbb{C}}

\def\bW{\mathbf{W}}
\def\bV{\mathbf{V}}
\def\by{\mathbf{y}}

\def\bH{\mathbf{H}}
\def\bU{\mathbf{U}}
\def\bQ{\mathbf{Q}}

\def\bP{\mathbf{P}}

\def\y{\mathbf{y}}
\def\Y{\mathbf{Y}}
\def\x{\mathbf{x}}

\def\h{\mathbf{h}}

\def\bA{\mathbf{A}}

\def\bD{\mathbf{D}}

\def\bI{\mathbf{I}}
\def\bH{\mathbf{H}}
\def\bB{\mathbf{B}}

\def\bh{\mathbf{h}}
\def\bthe{\boldsymbol{\Theta}}

\def\bw{\mathbf{w}}
\def\bd{\mathbf{d}}

\def\v{\mathbf{v}}
\def\bw{\mathbf{w}}
\def\bY{\mathbf{Y}}

\newtheorem{theorem}{Theorem}
\newtheorem{example}{Example}
\newtheorem{lemma}{Lemma}
\newtheorem{remark}{Remark}

\newtheorem{definition}{Definition}
\newtheorem{proposition}{Proposition}

\newtheoremstyle{noparens}%
  {}{}%
  {\itshape}{}%
  {\bfseries}{.}%
  { }%
  {\thmname{#1}\thmnumber{ #2}\mdseries\thmnote{ #3}}

\theoremstyle{noparens}

\title{Possible Research Questions}
\date{today}
\title{RIS Optimization and Scaling Laws in Multi-Operator Systems: Is Quadratic Scaling Achievable?}
\author{\IEEEauthorblockN{Zheyu Wu, Matteo Nerini, and Bruno Clerckx}
  	\thanks{Z. Wu, M. Nerini,  and B. Clerckx are with the Department of Electrical and Electronic Engineering, Imperial College London, London, SW7 2AZ, U.K. (email: \{zheyu.wu, m.nerini20, b.clerckx\}@imperial.ac.uk). }
  }
\date{today}
\begin{document}
\maketitle
\begin{abstract}
This paper studies multi-operator wireless communication systems aided by general reconfigurable intelligent surface (RIS), including both conventional single-connected RIS (also known as diagonal RIS) and beyond-diagonal RIS (BD-RIS). Specifically, we consider a system where multiple operators coexist in the same area over different frequency bands, each with a single-antenna base station, while one operator serves its single-antenna user with the aid of an RIS. In such a system, the RIS may unintentionally reflect signals from the non-serving operators, leading to inter-operator interference and rapid fluctuations of their effective channels.  To address this issue, we propose a practical RIS design framework that maximizes the received signal power of the serving operator while enforcing fixed RIS-reflected channels of the non-serving operators. We derive closed-form solutions to the resulting optimization problem, based on a novel  technique to deal with the coupled  unitary and linear equality constraints.  We further give scaling law analysis of the received signal power. 
For a two-operator system, the received signal power scales quadratically with the number of RIS elements for group-connected BD-RIS with group size $G_s\geq 2$, whereas for conventional single-connected RIS it scales only linearly. More generally, for an $L$-operator system with $L-1$ non-serving operators, the scaling-law transition occurs at  $G_s=L$, where quadratic scaling is achieved when $G_s\geq L$, and linear scaling otherwise. These results demonstrate that, in a multi-operator system, quadratic scaling is achievable only with BD-RIS architectures having enough interconnections. Simulation results validate the analysis and show the significant gain of BD-RIS over conventional RIS in multi-operator systems. In particular, group-connected BD-RIS with $G_s=2$ achieves a $13$ dB gain over conventional RIS in a two-operator system with a 128-element RIS.
 
 \end{abstract}
\begin{IEEEkeywords}
Beyond-diagonal reconfigurable intelligent surface, closed-form solution,  multi-operator system, scaling-law analysis. 
\end{IEEEkeywords}
\title{RIS Optimization in Multi-Operator Systems: Beyond-Diagonal Architectures Are Essential}
\section{Introduction}
Reconfigurable intelligent surface (RIS) is regarded as a promising technology for future wireless communication systems. By tuning a large number of passive reflecting elements, RIS can enhance the received signal power with very low hardware cost and power consumption. In particular,  in a RIS-aided wireless communication system, the received signal power scales quadratically with the number of RIS elements \cite{RIS1}. 
Beyond that, RIS can effectively extend coverage, suppress interference, and improve communication reliability. Owing to these appealing features, RIS has attracted significant research attention in recent years \cite{RIS1,RIS_tutorial,RIS_survey,Pan_overview_2022}.

Conventional RIS, also known as diagonal RIS (D-RIS),  typically employ a single-connected structure, where each element is controlled independently by a reconfigurable impedance. Recently, the concept of RIS has been greatly generalized to include also beyond-diagonal architectures that allow inter-connection between RIS elements.  In particular, RIS elements can be interconnected either all together or in groups, leading to fully-connected and group-connected RIS architectures, respectively \cite{BDRIS,group_conn}.  These beyond-diagonal RIS (BD-RIS) architectures offer enhanced flexibility in controlling the wireless channel and have been shown to achieve superior performance compared with conventional single-connected RIS\cite{BDRIS,group_conn,closeform,tutorialbdris}.  

Extensive research efforts have been devoted to the modeling \cite{Zparameter,generalmodel,lossy_peng}, signal processing \cite{Pan_overview_2022,RIS_sumrate,channel,wu}, and performance analysis \cite{Badiu_analysis_2020,Zhang_analysis_2024,mutualcoupling2} of both conventional single-connected RIS and BD-RIS. 
However, most existing studies and their results, including the favorable quadratic scaling property, are limited to single-operator scenarios, where only one operator controls the RIS to serve its users.
This assumption is overly simplistic. In realistic wireless deployments, multiple operators often coexist within the same geographical area, operate over non-overlapping frequency bands, and share the same propagation environment. Although the RIS is controlled by only one operator to enhance its own network performance, its passive elements reflect all incident signals indiscriminately. This introduces two major practical challenges for the operators that do not control the RIS, hereafter referred to as the non-serving operators. First,  the RIS creates unintended reflections toward the non-serving operators, which may degrade their performance. This effect is referred to as inter-operator interference (IOI). 
Second, the effective channels of non-serving operators vary with the RIS configuration and are difficult to estimate and track, making it highly challenging for them to reliably serve their own users.

These issues have been recognized and analyzed in several recent studies on conventional RIS \cite{Gurgunoglu_estimation_2024, Gurgunoglu_estimation_2026, Yashvanth_2024_single,Yashvanth_2026_single_2,Ghosh_2025_multi,Miridakis_2024_multi,Miridakis_2026_multi_2,Cai_optimize_2022,Lin_optimize_2025,Zhang_optimize_2025}.  The works in \cite{Gurgunoglu_estimation_2024} and \cite{Gurgunoglu_estimation_2026} investigated channel estimation in multi-operator RIS-assisted systems, analyzing  the impact of inter-operator pilot contamination and proposing mitigation schemes.  For downlink transmission,  \cite{Yashvanth_2024_single} and \cite{Yashvanth_2026_single_2} showed that, in a two-operator system where only one operator controls one or multiple RISs, the RIS unintentionally enhances the channel of the other operator as well. The works in \cite{Ghosh_2025_multi,Miridakis_2024_multi,Miridakis_2026_multi_2} further considered a multi-RIS multi-operator system, where each operator deploys its own RIS.    For mmWave channels, it has been demonstrated in \cite{Ghosh_2025_multi} that the performance of a given operator is not degraded by RISs controlled by non-serving operators, due to the spatial sparsity of mmWave channels. In contrast, \cite{Miridakis_2024_multi} and \cite{Miridakis_2026_multi_2} reached the opposite conclusion by assuming Rayleigh/Rician and correlated Nakagami-$m$ fading channels, respectively. They proved that the receive signal-to-noise ratio (SNR) gain enabled by RIS degrades from quadratic scaling with respective to the number of RIS elements in the absence of IOI to linear scaling in its presence.   Several works have also investigated the joint optimization of RISs across different operators  to enhance overall system performance \cite{Cai_optimize_2022,Lin_optimize_2025,Zhang_optimize_2025}. However, these approaches typically rely on strong assumptions, such as shared channel state information (CSI) among operators or the presence of a third-party controller to coordinate RISs across operators.  In practice, such assumptions are often difficult to satisfy as they incur significant challenges including heavy signaling overhead and privacy concerns.


To date, the practical challenges and  performance limits  of multi-operator RIS-aided systems remain largely unresolved  and not yet fully understood.  First, existing downlink performance  analyses and conclusions  rely heavily on the specific channel models and restrictive system assumptions, such as whether only one operator is aided by RIS or each operator deploys its own RIS \cite{Yashvanth_2024_single,Yashvanth_2026_single_2,Ghosh_2025_multi,Miridakis_2024_multi,Miridakis_2026_multi_2}. 
Second, existing analyses mainly focus on characterizing the impact of IOI on received signal power or achievable rate, under the assumption that each operator only serves a single-antenna user \cite{Yashvanth_2024_single,Yashvanth_2026_single_2,Ghosh_2025_multi,Miridakis_2024_multi,Miridakis_2026_multi_2}. In contrast, the issue of rapid fluctuations of the effective channels experienced by non-serving operators, and the resulting CSI acquisition challenges in multi-user settings, have received much less attention and remain unresolved.    Third,  most existing IOI mitigation approaches require cross-operator coordination \cite{Cai_optimize_2022,Lin_optimize_2025,Zhang_optimize_2025}, which is difficult to implement and often impractical. Fourth, the performance of more general BD-RIS architectures in multi-operator scenarios has not been investigated. It remains an open question whether BD-RIS, with its greater design flexibility, can mitigate the challenges of multi-operator systems  and achieve  strong system performance.

In this paper, we present the first study of multi-operator systems assisted by general RIS, including both conventional single-connected RIS and BD-RIS. We consider a multi-operator setting in which each operator is associated with a single-antenna base station (BS). Among them, one operator, referred to as the serving operator, serves its single-antenna user with the aid of an RIS,  while the system configurations of the non-serving operators are left general.  Based on this setup, we propose a practical RIS optimization framework that controls IOI and prevents effective-channel fluctuations of non-serving operators.  We further theoretically characterize the performance of different RIS architectures  under the proposed framework. In particular, we address the following critical question: in the considered multi-operator setting, which RIS architectures can preserve the favorable quadratic scaling of the single-operator scenario? 
 The main contributions are summarized as follows. 

First, we introduce a practical model that explicitly controls the impact of RIS on non-serving operators.  Specifically, the RIS is optimized for its serving operator while enforcing fixed reflected channels of the non-serving operators. The proposed model offers three key advantages: (i) it prevents the RIS from continuously changing the channels experienced by the non-serving operators, thereby  eliminating IOI and avoiding  rapid channel fluctuations; (ii) The network configurations of the non-serving operators, such as whether they are assisted by RISs and how many users they serve, are kept fully general, which encompasses existing models in \cite{Yashvanth_2024_single,Yashvanth_2026_single_2,Ghosh_2025_multi,Miridakis_2024_multi,Miridakis_2026_multi_2} as special cases; (iii) The proposed mechanism only requires exchanging the CSI of the quasi-static BS-RIS channels of the non-serving operators, which greatly reduces the signaling overhead compared to  \cite{Cai_optimize_2022,Lin_optimize_2025,Zhang_optimize_2025}.

Second, we derive closed-form globally optimal solutions to the resulting optimization problem for fully-, group-, and single-connected RIS. In particular, to handle the coupled unitary and linear equality constraints induced by the BD-RIS structure and the fixed reflected-channel constraints, we introduce a novel technique that reformulates these constraints into an equivalent form expressed in terms of a lower-dimensional unconstrained unitary matrix. Based on this transformation, the original problem reduces to a lower-dimensional optimization problem involving only unitary constraints and can be solved in closed form.

Third, we theoretically characterize the received signal power achieved by different RIS architectures under the proposed scheme. For a two-operator system, we show that group-connected BD-RIS with group size $G_s\geq 2$ achieves a received signal power scaling quadratically with the number of RIS elements, while conventional single-connected RIS achieves only linear scaling, under Rayleigh fading RIS-user channels and Rayleigh/line-of-sight (LoS) BS-RIS channels. More generally, for an $L$-operator system with $L-1$ non-serving operators, we prove that the scaling-law transition  occurs at group size $G_s=L$:  quadratic scaling is achieved when $G_s\geq L$ whereas only linear scaling is achievable when $G_s<L$, under Rayleigh fading for both RIS-user and BS-RIS channels. Numerical results further show that these scaling behaviors, although established under specific channel assumptions, remain valid for Rayleigh, Rician, and LoS BS-RIS channels.

Fourth, we discuss the generality of the proposed framework and its analytical insights. In particular, for more general multi-antenna multi-user systems, the proposed technique for handling coupled unitary and linear equality constraints can still be applied to solve the corresponding utility-maximization (e.g.,  received signal power maximization or sum-rate maximization) problems. 
 Numerical results suggest that, in the multi-antenna case, the scaling of the received signal power transitions at group size $G_s=N_T(L-1)+1$,  where $N_T$ is the number of transmit antennas at each BS: the scaling is quadratic  when $G_s\geq N_T(L-1)+1$ and linear otherwise. This requirement on $G_s$ can be further relaxed when the channels from the BSs of the non-serving operators to the RIS are rank-deficient.


\emph{Organization:}
Section \ref{sec:system} introduces the RIS-aided multi-operator system model and formulates the problem. Sections \ref{sec:solution} and \ref{sec:scalinglaw} focus on the two-operator case, deriving closed-form solutions to the formulated problem and providing scaling-law analysis, respectively. These results are then extended to more general multi-operator systems in Section \ref{sec:multioperator}. Section \ref{subsec:multiantenna} discusses the generalization to multi-antenna multi-user case.  Simulation results are provided in Section \ref{sec:simulation}, and Section \ref{sec:conclusion} concludes the paper. 

\emph{Notations}: Throughout the paper, we use italic, bold lowercase, and bold uppercase letters to denote scalar, column vector, and matrix, respectively.  For a matrix $\mathbf{X}$,  $[\bX]_{i_1:i_2,j_1:j_2}$ denotes the submatrix formed by rows $i_1$ to $i_2$ and columns $j_1$ to $j_2$. The operators  $(\cdot)^T$, $(\cdot)^H$, $(\cdot)^*$, and $(\cdot)^{-1}$  return the transpose,  the Hermitian transpose, the conjugate, and the inverse of their arguments, respectively. The notations $|\cdot|$ and $\|\cdot\|$ denote the magnitude of a complex scalar and the $\ell_2$ norm  of a vector, respectively.  For a positive semidefinite matrix $\bX$, $\bX^{\frac{1}{2}}$ refers to its square root. The symbol $\mathbf{I}_N$ denotes the $N\times N$ identity matrix.  The notation $\mathcal{CN}(\mathbf{0},\mathbf{I}_N)$ denotes the circularly symmetric complex Gaussian distribution with mean zero and covariance matrix $\mathbf{I}_N$, and $\chi^2_N$ denotes the chi-square distribution with $N$ degrees of freedom.   The notation  $\Gamma(x)$ denotes the Gamma function, and $\mathcal{O}(\cdot)$ denotes the standard big-$\mathcal{O}$ notation. For two random variables $X$ and $Y$, $X\mid Y$ denotes the random variable $X$ conditioned on $Y$. 
For two sequences $x_N$ and $y_N$, the notation $x_N \overset{N\to\infty}\sim  y_N$ means that $\lim_{N\to\infty}\frac{x_N}{y_N}=1$.  

\section{System Model and Problem Formulation}\label{sec:system}
Consider a wireless communication system in which $L$ independent operators serve their respective users within the same geographical area. Their respective transmitters are assumed to be fixed-location BSs and are denoted by BS$_1$, BS$_2$, $\dots$, BS$_L$. The operators employ non-overlapping frequency bands, and therefore no out-of-band interference exists. The transmission of operator 1 is assisted by an $N$-element RIS; see Fig. \ref{sysmodel}. Accordingly, operator 1, which controls the RIS, is referred to as the \emph{serving operator}, whereas the other $L-1$ operators, which do not control the RIS, are referred to as the \emph{non-serving operators}. 
\begin{figure}
\includegraphics[width=0.43\textwidth]{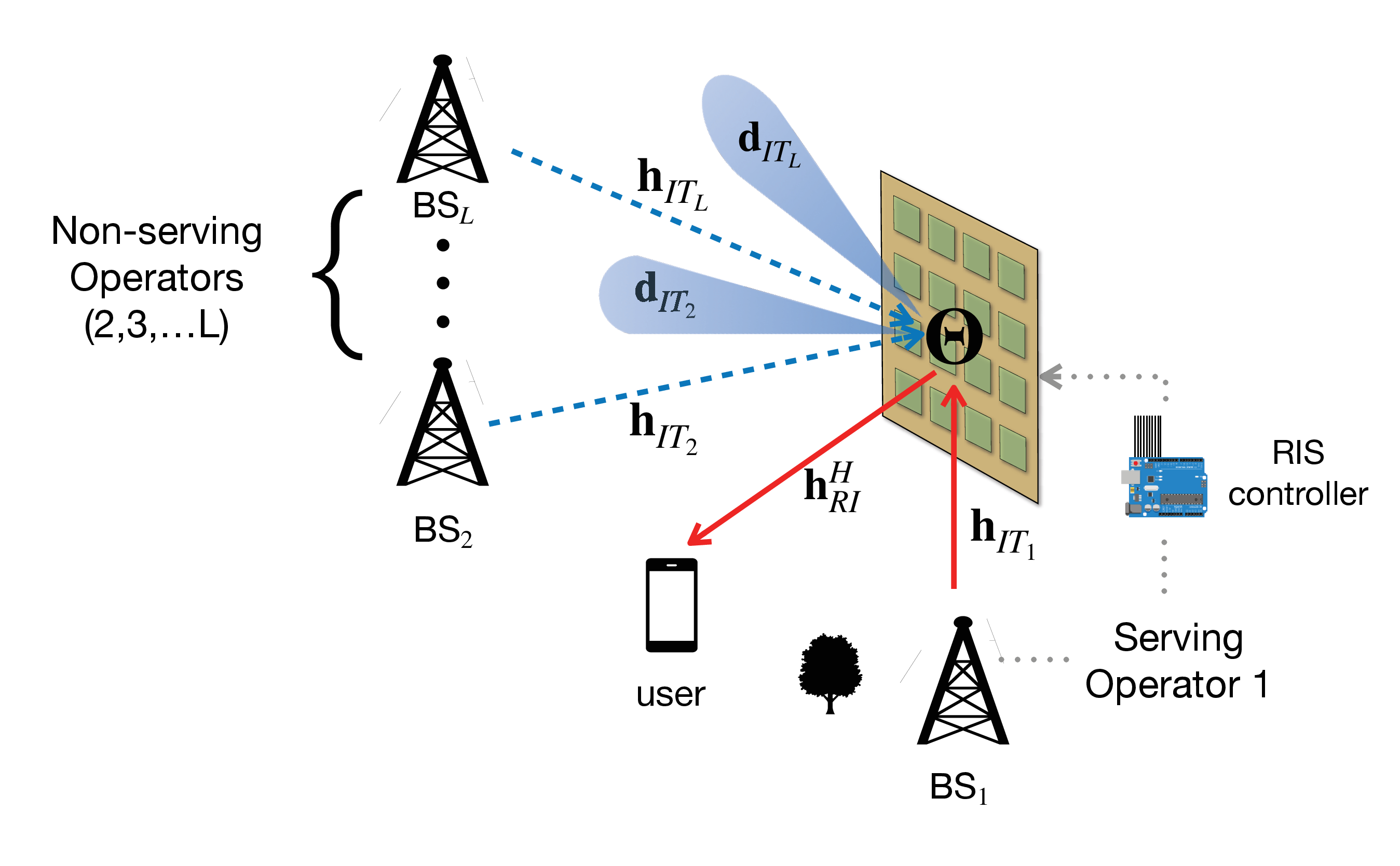}
\centering
\caption{Illustration of a multi-operator system model, where operator 1 is assisted by an RIS. The RIS-reflected channel of BS$_l$ is fixed as $\mathbf{d}_{IT_l}\in\C^{N}$ for $l=2,3,\dots,L$.  The users and infrastructures associated with the non-serving operators are not specified and are therefore omitted from the figure.}
\label{sysmodel}
\end{figure}

For simplicity, we assume that operator 1 serves a single-antenna user and that the BSs associated with all operators are equipped with single antennas\footnote{No other modeling constraints are imposed on the non-serving operators. For example, they may serve one or multiple users and may be aided by other RISs or not. Therefore, the system model in Fig. \ref{sysmodel} includes existing models in \cite{Yashvanth_2024_single,Yashvanth_2026_single_2,Ghosh_2025_multi,Miridakis_2024_multi,Miridakis_2026_multi_2} as special cases.
}. Let $\h_{IT_l}\in\C^{N\times 1}$ denote the BS$_l$-RIS channel, where $l=1,2,\dots, L$, and let  $\h_{RI}^H\in\C^{1 \times N}$ denote the RIS-user channel of operator 1. To focus on the effect of the RIS, we assume that the direct path between BS$_1$ and the user is blocked. The proposed approach and analysis can be readily extended to the case where the direct path is included; see the discussions at the end of Section \ref{subsec:group}. The effective channel from   BS$_1$  to the user then reads $$h=\h_{RI}^H\bthe\h_{IT_1},$$ where  $\bthe\in\C^{N\times N}$ is the scattering matrix of the RIS. The received signal at the user is $y=hx+n,$ where $x\in\C$ is the transmit signal, and $n\sim\mathcal{CN}(0,\sigma^2)$ denotes the additive white Gaussian noise. We further assume that the frequency separations among the operators are sufficiently small such that the frequency-dependent variation of the RIS response can be neglected. Under this assumption, the same RIS scattering matrix $\bthe$ applies to all operators.

In conventional single-operator RIS-aided designs, the scattering matrix $\bthe$ is optimized to maximize the received signal power, which is given by $P_R=P_T|h|^2$, where $P_T$ is the transmitted power \cite{BDRIS,closeform}. However, this approach is problematic in a multi-operator scenario. Since the non-serving operators do not control the RIS,  rapid updates of $\bthe$ performed by operator 1 lead to time-varying and unpredictable effective channels for the other operators, making it difficult for them to serve their users reliably.
  
To address this issue, we maximize $P_R$ for operator 1 while constraining the RIS-reflected channels of the non-serving operators, i.e., $\bthe \h_{IT_l}$ for $l=2,\dots,L$, to be fixed to vectors $\mathbf{d}_{IT_l} \in \mathbb{C}^{N\times 1}$ (see Fig. \ref{sysmodel}), which can be chosen arbitrarily as long as they satisfy a feasibility condition that will be later specified. 
 Since the BSs-RIS channels $\h_{IT_l},~l=1,2,\dots,L,$ are quasi-static and vary much more slowly than the RIS-user channel $\h_{RI}$, constraining $\bthe\h_{IT_l}=\mathbf{d}_{IT_l}$, $l=2,\dots, L$, fixes the impact of the RIS on the non-serving operators over a relatively long time scale, while still allowing $\bthe$ to adapt to $\h_{RI}$ to improve the performance of operator 1. This design requires only the exchange of $\h_{IT_l}$ between operator $l$ and operator 1, where $l=2,\dots,L$, resulting in infrequent information exchange and  reduced signaling overhead compared to \cite{Cai_optimize_2022,Lin_optimize_2025,Zhang_optimize_2025}.

Regarding the RIS architectures, we focus on the general group-connected RIS in this paper. For a lossless group-connected RIS consisting of $G$ groups with equal size, the scattering matrix exhibits a block  diagonal structure as  $\bthe=\text{diag}(\bthe_1,\bthe_2,\dots,\bthe_G)$, where $\bthe_g^H\bthe_g=\mathbf{I}_{G_s}$ for all $g=1,2,\dots, G$, and $G_s=N/G$ is the group size\footnote{This paper focuses on non-reciprocal BD-RIS \cite{Li2025nonreciprocal,Liu2026nonreciprocal} to study the fundamental limits of BD-RIS in multi-operator systems. The analysis of reciprocal BD-RIS, with additional symmetry constraint on $\bthe_g$ in problem \eqref{problem00}, is left for future work.}. In particular, fully-connected RIS and conventional single-connected RIS are special cases of group-connected RIS with $G=1$ and $G=N$, respectively. 

Based on the above discussions, the RIS scattering matrix design for the considered multi-operator systems can be cast as the following optimization problems: 
\begin{subequations}\label{problem00}
\begin{align}
\max_{\bthe}~&|\h_{RI}^H\bthe\h_{IT_1}|^2\label{PR}\\
\text{s.t. }~&\bthe\h_{IT_l}=\mathbf{d}_{IT_l},~l=2,3,\dots,L,\label{con01}\\
&\bthe=\text{diag}(\bthe_1,\bthe_2,\dots,\bthe_G),\label{con02}\\
&\bthe_{g}^H\bthe_{g}=\mathbf{I}_{G_s},~g=1,2,\dots, G,\label{con03}
\end{align}
\end{subequations}
where $P_T$ is set to $1$ without loss of generality.
\begin{remark}
In this paper, we treat $\{\d_{IT_l}\}_{2\leq l\leq L}$ as fixed vectors. As will be shown in Theorems \ref{scalinglaw}--\ref{scalinglaw3}, when the RIS-user channel is Rayleigh fading, the expected received signal power is independent of $\{\d_{IT_l}\}_{2\leq l\leq L}$. For other channel distributions, however, the choice of $\{\d_{IT_l}\}_{2\leq l\leq L}$ may affect the expected received signal power and can be further optimized based on $\{\h_{IT_l}\}_{2\leq l\leq L}$ and the distribution of $\h_{RI}$. Investigation of the impact of $\{\d_{IT_l}\}_{2\leq l\leq L}$ and its design under other channel distributions is left for future work.
\end{remark}
In Sections \ref{sec:solution} and \ref{sec:scalinglaw}, we focus on  a two-operator system  to introduce the main ideas for solving problem \eqref{problem00} and to gain insights into how the scaling law  changes compared with the single-operator case. We then extend the analysis to the general multi-operator setting in Section \ref{sec:multioperator}. A brief discussion on the more general multi-antenna multi-user case is provided in Section \ref{subsec:multiantenna}.
\section{RIS Optimization for the Two-Operator Model}\label{sec:solution}
In this section, we focus on a two-operator system, i.e., $L=2$. The optimization problem in \eqref{problem00} reduces to 
\begin{subequations}\label{problem0}
\begin{align}
\max_{\bthe}~&|\h_{RI}^H\bthe\h_{IT_1}|^2\\
\text{s.t. }~&\bthe\h_{IT_2}=\mathbf{d},\label{problem0:con1}\\
&\bthe=\text{diag}(\bthe_1,\bthe_2,\dots,\bthe_G),\\
&\bthe_{g}^H\bthe_{g}=\mathbf{I}_{G_s},~g=1,2,\dots, G.\label{problem0:con2}
\end{align}
\end{subequations}
In \eqref{problem0:con1}, we omit the subscript of $\mathbf{d}_{IT_2}$ and simply write $\mathbf{d}$ for brevity. 

By exploiting the block-diagonal structure of $\bthe$, problem \eqref{problem0} can be equivalently rewritten as
\begin{subequations}\label{problem:group}
\begin{align}
\max_{\{\bthe_g\}}~&\left|\sum_{g=1}^{G}\h_{RI,g}^H\bthe_g\h_{IT_1,g}\right|^2\\
\text{s.t. }~&\bthe_g\h_{IT_2,g}=\mathbf{d}_g,\quad g=1,2,\dots,G,\label{con1:group}\\
&\bthe_g^H\bthe_g=\mathbf{I}_{G_s},\quad g=1,2,\dots,G.\label{con2:group}
\end{align}
\end{subequations}
Here, $\h_{IT_l}=[\h_{IT_l,1}^T,\h_{IT_l,2}^T,\dots,\h_{IT_l,G}^T]^T$ with $\h_{IT_l,g}\in\C^{G_s\times 1}$,  $l\in\{1,2\}$, $\h_{RI}=[\h_{RI,1}^T, \h_{RI,2}^T,\dots,\h_{RI,G}^T]^T$ with $\h_{RI,g}\in\C^{G_s\times 1}$, and $\mathbf{d}=[\mathbf{d}_1^T,\mathbf{d}_2^T,\dots,\mathbf{d}_G^T]^T$ with $\mathbf{d}_g\in\C^{G_s\times 1}$.
Due to constraints \eqref{con1:group} and \eqref{con2:group}, problem \eqref{problem:group} is feasible only if 
\begin{equation}\label{feasibility}
\|\mathbf{d}_g\|=\|\h_{IT_2,g}\|,\quad g=1,2,\dots,G.
\end{equation}
In the following, we assume that \eqref{feasibility} holds.

Compared with conventional single-operator RIS design \cite{BDRIS,closeform}, the main challenge in \eqref{problem:group} lies in the additional linear constraint \eqref{con1:group}.  A key insight is that, this constraint specifies the behavior of each block $\bthe_g\in\C^{G_s\times G_s}$ only along the direction of $\h_{IT_{2,g}}\in\C^{G_s\times 1}$. When $G_s\geq 2$, the remaining $(G_s-1)$-dimensional subspace orthogonal to $\h_{IT_2,g}$ is still free and can be optimized to improve the desired signal. 
Motivated by this idea, Section \ref{subsec:technique} first develops a novel approach for handling the coupled unitary and linear equality constraints in \eqref{con1:group} and \eqref{con2:group}. Based on this approach, Sections \ref{subsec:fully} and \ref{subsec:group} derive closed-form solutions to problem \eqref{problem:group} for fully-connected RIS and group-connected RIS with  $G_s\geq 2$, respectively. The single-connected RIS case, i.e., $G_s=1$, is discussed in Section \ref{subsec:single}.

\subsection{A Novel Approach For Handling Coupled Unitary and Linear Equality  Constraints}\label{subsec:technique}
Given two vectors $\x\in\C^{n\times 1}$ and $\y\in\C^{n\times 1}$ satisfying $\|\x\|=\|\y\|\neq 0$, this subsection develops a novel approach for reformulating  the following constraints on $\bQ\in\C^{n\times n}$:
\begin{equation}\label{generic_constraint}
\bQ\x=\y,\qquad \bQ^H\bQ=\mathbf{I}_n.
\end{equation}
In problem \eqref{problem:group}, constraints \eqref{con1:group} and \eqref{con2:group} correspond to \eqref{generic_constraint} with
$n=G_s$,  $\x=\h_{IT_2,g}$,  $\y=\mathbf{d}_g,$ and $\bQ=\bthe_g$.
 
We begin by introducing the notion of unitary completion.\begin{definition}[Unitary completion of a vector]\label{definition:UC_vector}
For any nonzero vector $\mathbf{x}\in\mathbb{C}^{n\times 1}$, a unitary matrix $\bU\in\C^{n\times n}$ is called a unitary completion of $\x$ if its first column is $\x/\|\x\|$, i.e.,   $\mathbf{U}\mathbf{e}_1={\mathbf{x}}/{\|\mathbf{x}\|}$, where $\mathbf{e}_1=[1,0,\dots,0]^T$. Such a matrix is denoted by $\bU(\x)$, and further satisfies 
\begin{equation}\label{U_property}
\bU(\x)^H\x=\|\x\|\mathbf{e}_1.
\end{equation}
\end{definition}
\begin{remark}
For a given vector $\x$, its unitary completion can be constructed as follows. The first column is $\x/\|\x\|$, and the remaining columns form an orthonormal basis for the orthogonal complement of $\x$, which can be obtained, for example, by applying the QR decomposition to the orthogonal projection matrix $\bP_{\x}^{\perp}=\mathbf{I}-\frac{\x\x^H}{\|\x\|^2}$.\end{remark}

Using Definition~\ref{definition:UC_vector}, the coupled constraints in \eqref{generic_constraint} can be reformulated in a more convenient form.
\begin{proposition}\label{pro1}
Given $\x\in\C^n$ and  $\y\in\C^n$ with $\|\x\|=\|\y\|\neq 0$. Let $\bU(\x)$ and $\bU(\y)$ be any unitary completion of $\x$ and $\y$, respectively.  The constraint in \eqref{generic_constraint} is satisfied if and only if $\bQ$ can be expressed as 
\begin{equation}\label{decomposition}
\bQ=\bU(\mathbf{y})
\begin{bmatrix}
1 & \mathbf{0}\\
\mathbf{0} & \bar{\bQ}
\end{bmatrix}
\bU(\x)^H,
\end{equation}
where $\bar{\bQ}\in\C^{(n-1)\times (n-1)}$ is any unitary matrix, i.e., $\bar{\bQ}^H\bar{\bQ}=\mathbf{I}_{n-1}$.
\end{proposition}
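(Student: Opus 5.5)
The proof has two directions, and both reduce to a change of basis via the unitary completions $\bU(\x)$ and $\bU(\y)$. The central object is the transformed matrix
\begin{equation*}
\bar{\mathbf{M}}=\bU(\y)^H\bQ\,\bU(\x).
\end{equation*}
Since $\bU(\x)$ and $\bU(\y)$ are unitary, $\bQ$ is unitary if and only if $\bar{\mathbf{M}}$ is unitary. Moreover, $\bQ=\bU(\y)\bar{\mathbf{M}}\bU(\x)^H$, so every $\bQ$ has a unique representation of this form. The statement therefore reduces to one claim: $\bar{\mathbf{M}}$ satisfies the transformed constraints exactly when it has the block form $\mathrm{diag}(1,\bar{\bQ})$ with $\bar{\bQ}$ unitary.

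\emph{Sufficiency.} Suppose $\bQ$ has the form \eqref{decomposition}. Then $\bQ$ is a product of three unitary matrices, so $\bQ^H\bQ=\mathbf{I}_n$. For the linear constraint, apply \eqref{U_property} to get $\bU(\x)^H\x=\|\x\|\mathbf{e}_1$. The middle block matrix maps $\mathbf{e}_1$ to $\mathbf{e}_1$. Finally, $\bU(\y)\mathbf{e}_1=\y/\|\y\|$ by Definition~\ref{definition:UC_vector}. Combining these and using $\|\x\|=\|\y\|$ gives $\bQ\x=\y$.

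\emph{Necessity.} Suppose $\bQ$ satisfies \eqref{generic_constraint}. The constraint $\bQ\x=\y$, together with \eqref{U_property} applied to both completions, gives
\begin{equation*}
\bar{\mathbf{M}}\mathbf{e}_1=\bU(\y)^H\bQ\x/\|\x\|=\bU(\y)^H\y/\|\y\|=\mathbf{e}_1.
\end{equation*}
So the first column of $\bar{\mathbf{M}}$ is $\mathbf{e}_1$. Because $\bar{\mathbf{M}}$ is unitary, its rows are also orthonormal. The first row therefore has unit norm, and its $(1,1)$ entry already equals $1$, so the remaining entries of the first row must vanish. Equivalently, since $\bar{\mathbf{M}}^H=\bar{\mathbf{M}}^{-1}$ also fixes $\mathbf{e}_1$, the first row of $\bar{\mathbf{M}}$ is $\mathbf{e}_1^T$.

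Hence $\bar{\mathbf{M}}=\mathrm{diag}(1,\bar{\bQ})$ for some $\bar{\bQ}\in\C^{(n-1)\times(n-1)}$. Unitarity of $\bar{\mathbf{M}}$ then forces $\bar{\bQ}^H\bar{\bQ}=\mathbf{I}_{n-1}$. Multiplying back by the completions recovers \eqref{decomposition}.

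The only step needing care is showing that the first row vanishes, not just the first column. This is where the finite-dimensional unitary property (rows orthonormal as well as columns) is used. The case $n=1$ is degenerate: the block $\bar{\bQ}$ is empty and $\bQ=\y/\x$ is a unit-modulus scalar, which is consistent with the proposition.
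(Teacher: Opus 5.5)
Your proposal is correct and follows essentially the same route as the paper. Like the paper, you conjugate $\bQ$ by the unitary completions, show the transformed matrix has first column $\mathbf{e}_1$, and use unitarity (the unit-norm first row, i.e., $1+\|\boldsymbol{\beta}\|^2=1$) to force the off-diagonal block to vanish, with sufficiency checked directly via \eqref{U_property} and the definition of unitary completion.
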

\begin{proof}
We first prove the necessity. Suppose that $\bQ$ satisfies \eqref{generic_constraint}.
Define
\begin{equation}\label{def:Q_generic}
\widetilde{\bQ}:=\bU(\y)^H\bQ\bU(\x).
\end{equation}
Since $\bU(\y)$, $\bQ$, and $\bU(\x)$ are all unitary, $\widetilde{\bQ}$ is also unitary. Moreover, its first column is given by
\[
\begin{aligned}
\widetilde{\bQ}\mathbf{e}_1=\bU(\y)^H\bQ\bU(\x)\mathbf{e}_1&\overset{(a)}{=}\bU(\y)^H\bQ\frac{\x}{\|\x\|}\\
&\overset{(b)}{=}\frac{1}{\|\x\|}\bU(\y)^H\y\\
&\overset{(c)}{=}\frac{\|\y\|}{\|\x\|}\mathbf{e}_1\\
&\overset{(d)}{=}\mathbf{e}_1,
\end{aligned}
\]
where $(a)$ follows from the definition of unitary completion, $(b)$ uses the constraint $\bQ\x=\y$, $(c)$ follows from \eqref{U_property}, and $(d)$ uses $\|\x\|=\|\y\|$. 
Therefore, $\widetilde{\bQ}$ has the form
\[
\widetilde{\bQ}=
\begin{bmatrix}
1 & \boldsymbol{\beta}^T\\
\mathbf{0} & \bar{\bQ}
\end{bmatrix}
\]
for some vector $\boldsymbol{\beta}\in\C^{n-1}$ and some matrix $\bar{\bQ}\in\C^{(n-1)\times(n-1)}$. Since $\widetilde{\bQ}$ is unitary, we have $\bar{\bQ}\bar{\bQ}^H=\mathbf{I}_{n-1}$ and $1+\|\boldsymbol{\beta}\|^2=1$, i.e. $\boldsymbol{\beta}=\mathbf{0}$. Hence,
\[
\widetilde{\bQ}=
\begin{bmatrix}
1 & \mathbf{0}\\
\mathbf{0} & \bar{\bQ}
\end{bmatrix}.
\]
Substituting $\widetilde{\bQ}$ into \eqref{def:Q_generic} yields \eqref{decomposition},   which proves the necessity.

We next prove the sufficiency. Suppose that $\bQ$ is given by \eqref{decomposition} with $\bar{\bQ}^H\bar{\bQ}=\mathbf{I}_{n-1}$.
Since $\bU(\x)$, $\bU(\y)$, and
$
\left[\begin{smallmatrix}
1 & \mathbf{0}\\
\mathbf{0} & \bar{\bQ}
\end{smallmatrix}\right]$
are all unitary, it follows that $\bQ$ is unitary, i.e.,
$\bQ^H\bQ=\mathbf{I}_n.$
It remains to verify the linear constraint. Using \eqref{decomposition},
\[
\begin{aligned}
\bQ\x
&=\bU(\y)
\begin{bmatrix}
1 & \mathbf{0}\\
\mathbf{0} & \bar{\bQ}
\end{bmatrix}
\bU(\x)^H\x\\
&\overset{(a)}{=}
\bU(\y)
\begin{bmatrix}
1 & \mathbf{0}\\
\mathbf{0} & \bar{\bQ}
\end{bmatrix}
\|\x\|\mathbf{e}_1\\
&=\|\x\|\bU(\y)\mathbf{e}_1\\
&\overset{(b)}{=}\frac{\|\x\|}{\|\y\|}\y\overset{(c)}{=}\y,
\end{aligned}
\]
where $(a)$ follows from \eqref{U_property}, $(b)$ uses the definition of unitary completion, and $(c)$ uses $\|\x\|=\|\y\|$. Therefore, $\bQ\x=\y$ holds. 
The proof is complete.
\end{proof}
Proposition~\ref{pro1} shows that all matrices satisfying \eqref{generic_constraint} can be written in the form of \eqref{decomposition}. In particular, the degrees of freedom in $\bQ$ under constraint \eqref{generic_constraint} are fully captured by the $(n-1)$-dimensional unitary matrix $\bar{\bQ}$. 
From an optimization viewpoint, this result converts the coupled unitary and linear equality constraint on $\bQ$ into a unitary-only constraint on $\bar{\bQ}$, which is much easier to handle. Proposition \ref{pro1} can also be generalized to involve matrix-form linear equality constraints; see Section \ref{subsec:solution_general} for detailed discussions.  In the following, we exploit Proposition \ref{pro1} to solve problem \eqref{problem:group}.
\subsection{Closed-Form Solution for Fully-Connected RIS}\label{subsec:fully}
To illustrate the main idea and ease the notation, we first derive closed-form  globally optimal solutions to \eqref{problem:group} for fully-connected RIS in this subsection,  and then extend the results to general group-connected RIS with $G_s\geq 2$ in Section \ref{subsec:group}.  

For fully-connected RIS,  we have $G=1$ and $G_s=N$, thus problem \eqref{problem:group} reduces to
\begin{subequations}\label{problem:fully}
\begin{align}
\max_{\bthe}~&|\h_{RI}^H\bthe\h_{IT_1}|^2\\
\text{s.t. }~&\bthe\h_{IT_2}=\mathbf{d},\label{problemfully:con1}\\
&\bthe^H\bthe=\mathbf{I}_{N}.\label{problemfully:con2}
\end{align}
\end{subequations}  
By applying Proposition \ref{pro1} to constraints \eqref{problemfully:con1} and \eqref{problemfully:con2},  problem \eqref{problem:fully} can be reformulated as
\begin{subequations}\label{problem:fully2}
\begin{align}
\max_{\{\bthe,\bar{\bthe}\}}&
\left|\h_{RI}^H\bthe\h_{IT_1}\right|^2\label{fully:obj}\\
\text{s.t. }~~&
\bthe=\bU(\mathbf{d})
\begin{bmatrix}
1 & \mathbf{0}\\
\mathbf{0} & \bar{\bthe}
\end{bmatrix}
\bU(\h_{IT_{2}})\hspace{-0.03cm}^H,\label{fully:con1_2}\\
&
\bar{\bthe}^H\bar{\bthe}=\mathbf{I}_{N-1},
\end{align}
\end{subequations}
where $\bU(\mathbf{d})$ and $\bU(\h_{IT_{2}})$ are arbitrary unitary completions of $\mathbf{d}$ and $\h_{IT_{2}}$, respectively. 
Substituting \eqref{fully:con1_2} into the objective function in \eqref{fully:obj}, and denoting $\bw=\bU(\mathbf{d})^H\h_{RI}$ and  $\mathbf{v}=\bU(\h_{IT_{2}})^H\h_{IT_{1}},$
we obtain an equivalent optimization problem over $\bar{\bthe}$ as follows
\begin{subequations}\label{problem:fully3}
\begin{align}
\max_{\bar{\bthe}}~&
\left|\left([\bw]_1^*[\mathbf{v}]_1+[\bw]_{2:N}^H\bar{\bthe}[\mathbf{v}]_{2:N}\right)\right|^2\label{fully:obj2}\\
\text{s.t. }~&
\bar{\bthe}^H\bar{\bthe}=\mathbf{I}_{N-1}.
\end{align}
\end{subequations}
By the Cauchy-Schwarz inequality, the optimal value is
\begin{equation}\label{fully:optobj}
P_R^\star=
\left(
|[\bw]_1^*[\mathbf{v}]_1|
+
\|[\bw]_{2:N}\|\|[\mathbf{v}]_{2:N}\|
\right)^2,
\end{equation}
which is attained when $\bar{\bthe}[\mathbf{v}]_{2:N}$ is aligned with $[\bw]_{2:N}$, and the phase of  $[\bw]_{2:N}^H\bar{\bthe}[\mathbf{v}]_{2:N}$ is the same as that of  $[\bw]_1^*[\mathbf{v}]_1$, i.e., 
\begin{equation}\label{eqn:bartheta}
\bar{\bthe}[\mathbf{v}]_{2:N}
=
e^{\mathrm{i}\angle([\bw]_1^*[\mathbf{v}]_1)}
\frac{\|[\mathbf{v}]_{2:N}\|}{\|[\bw]_{2:N}\|}
[\bw]_{2:N}.
\end{equation}
Applying Proposition \ref{pro1} to \eqref{eqn:bartheta} and noting that $\bU(\alpha\x)=\bU(\x)$ for any $\alpha>0$,   the solution can be expressed as \[
\bar{\bthe}^\star=
e^{\mathrm{i}\angle([\bw]_1^*[\mathbf{v}]_1)}
\bU([\bw]_{2:N})
\begin{bmatrix}
1 & \mathbf{0}\\
\mathbf{0} & \widetilde{\bthe}
\end{bmatrix}
\bU([\mathbf{v}]_{2:N})^H,
\]
where $\widetilde{\bthe}$ is any unitary matrix of dimension $N-2$, i.e., $\widetilde{\bthe}^H\widetilde{\bthe}=\mathbf{I}_{N-2}$.  
The optimal solution to \eqref{problem:fully} is then obtained by substituting $\bar{\bthe}^\star$ into \eqref{fully:con1_2}.

 \begin{remark}
Although there exist infinitely many choices of $\bU(\mathbf{d})$ and $\bU(\h_{IT_{2}})$, the optimization problem in \eqref{problem:fully3} is independent of the specific unitary completions adopted.   To illustrate this, it suffices to show that both $[\bw]_1^*[\mathbf{v}]_1$ and the range of $$f(\bar{\bthe}):=[{\bw}]_{2:N}^H\bar{\bthe}[{\mathbf{v}}]_{2:N},~\bar{\bthe}^H\bar{\bthe}=\mathbf{I}_{N-1}$$ are irrelevant to $\bU(\bd)$ and $\bU(\h_{IT_{2}})$.
 
 Since the first columns of $\bU(\bd)$ and $\bU(\h_{IT_{2}})$ are fixed as $\bd/\|\bd\|$ and $\h_{IT_{2}}/\|\h_{IT_{2}}\|$, respectively, we have 
\begin{equation}\label{firstterm}
\begin{aligned}
[\bw]_1=\frac{\bd^H\h_{RI}}{\|\bd\|}\text{ and }
[\mathbf{v}]_1=\frac{\h_{IT_{2}}^H\h_{IT_{1}}}{\|\h_{IT_{2}}\|},
\end{aligned}
\end{equation}
which are independent of $\bU(\bd)$ and $\bU(\h_{IT_{2}})$, and so is the product $[\bw]_1^*[\mathbf{v}]_1$.
In addition, the range of $f(\bar{\bthe})$ can be expressed as   
 $$
 \begin{aligned}
 \operatorname{Range}(f)&=\{[{\bw}]_{2:N}^H\bar{\bthe}[{\mathbf{v}]_{2:N}}\mid \bar{\bthe}^H\bar{\bthe}=\mathbf{I}_{N-1}\}\\
 &=\{x\mid |x|\leq\|[{\bw}]_{2:N}\|\|[\mathbf{v}]_{2:N}\| \},
 \end{aligned}$$
where  
$$
\begin{aligned}
\|[{\bw}]_{2:N}\|^2&=\|\bU(\mathbf{d})^H\h_{RI}\|^2-\left|[\bU(\mathbf{d})^H\h_{RI}]_{1}\right|^2\\
&\overset{(a)}=\|\h_{RI}\|^2-\frac{|\bd^H\h_{RI}|^2}{\|\bd\|^2}
\end{aligned}$$ and
 $$
 \begin{aligned}
 \|[{\mathbf{v}}]_{2:N}\|^2
 &=\|\bU(\mathbf{h}_{IT_{2}})^H\h_{IT_{1}}\|^2-\left|[\bU(\mathbf{h}_{IT_{2}})^H\h_{IT_{1}}]_{1}\right|^2\\
 &\overset{(b)}=\|\h_{IT_{1}}\|^2-\frac{|\h_{IT_{2}}^H\h_{IT_{1}}|^2}{\|\h_{IT_{2}}\|^2}
 \end{aligned}$$
 are irrelevant to the choices of $\bU(\bd)$ and $\bU(\h_{IT_{2}})$; {(a)} and (b) use the fact that   $\bU(\bd)$, $\bU(\h_{IT_{2}})$ are unitary and \eqref{firstterm}.
 
   In particular, the optimal value  in \eqref{fully:optobj}  is irrelevant to the specific choices of $\bU(\bd)$ and $\bU(\h_{IT_{2}})$, which can be expressed as
 \begin{equation*}\label{fstar:fully}
\begin{aligned}
&P_R^\star=\Bigg(\frac{\left|\bd^H\h_{RI}\right|}{\|\bd\|}\frac{\left|\h_{IT_2}^H\h_{IT_1}\right|}{\|\h_{IT_2}\|}\\
&+\hspace{-0.05cm}\bigg(\|\h_{RI}\|^2\hspace{-0.05cm}-\hspace{-0.05cm}\frac{|\bd^H\h_{RI}|^2}{\|\bd\|^2}\bigg)^{\frac{1}{2}}\bigg(\|\h_{IT_1}\|^2\hspace{-0.05cm}-\hspace{-0.05cm}\frac{|\h_{IT_2}^H\h_{IT_1}|^2}{\|\h_{IT_2}\|^2}\bigg)^{\frac{1}{2}}\Bigg)^2\hspace{-0.05cm}.
\end{aligned}
\end{equation*}
\end{remark}

\subsection{Closed-Form Solution for Group-Connected RIS with $G_s\geq 2$}\label{subsec:group}
In this subsection, we extend the previous approach  to general group-connected RIS with  $G_s\geq 2$. 

According to Proposition \ref{pro1},  constraints \eqref{con1:group} and \eqref{con2:group} are equivalent to
$$\bthe_g=\bU(\mathbf{d}_g)\left[\begin{matrix}1&\mathbf{0}\\\mathbf{0}&\bar{\bthe}_g\end{matrix}\right]\bU(\h_{IT_2,g})^H,~\bar{\bthe}_g^H\bar{\bthe}_g=\mathbf{I}_{G_s-1},$$for all $g=1,2,\dots, G$.
 Utilizing this and denoting  ${\bw}_g=\bU(\mathbf{d}_g)^H\h_{RI,g}$ and ${\mathbf{v}}_g=\bU(\h_{IT_2,g})^H\h_{IT_1,g}$,  problem \eqref{problem:group} transforms to 
 a problem over $\{\bar{\bthe}_g\}_{1\leq g\leq G}$ as follows
\begin{subequations}\label{problem5}
\begin{align}
\max_{\{\bar{\bthe}_g\}}~&
\left|\sum_{g=1}^G\left([\bw_g]_1^*[\mathbf{v}_g]_1+[\bw_g]_{2:G_s}^H\bar{\bthe}_g[\mathbf{v}_g]_{2:G_s}\right)\right|^2\label{obj2}\\
\text{s.t. }~&
\bar{\bthe}_g^H\bar{\bthe}_g=\mathbf{I}_{G_s-1},\quad g=1,2,\dots,G.
\end{align}
\end{subequations}
Let $\gamma=\sum_{g=1}^G[\bw_g]_1^*[\mathbf{v}_g]_1.$  Similar to the discussions in Section \ref{subsec:fully},  
 the optimal solution is
\[
\bar{\bthe}_g^\star=
e^{\mathrm{i}\angle\gamma}
\bU([\bw_g]_{2:G_s})
\begin{bmatrix}
1 & \mathbf{0}\\
\mathbf{0} & \widetilde{\bthe}_g
\end{bmatrix}
\bU([\mathbf{v}_g]_{2:G_s})^H,
\]
where $\widetilde{\bthe}_g^H\widetilde{\bthe}_g=\mathbf{I}_{G_s-2}$ if $G_s>2$, while the above expression reduces to a scalar phase factor when $G_s=2$. The optimal value is 
 \begin{equation}\label{group:optvalue}
\begin{aligned} 
P_R^\star
=&\Bigg(\left|\sum_{g=1}^G\frac{(\bd_g^H\h_{RI,g})^{*}}{\|\bd_g\|}\frac{\h_{IT_2,g}^H\h_{IT_1,g}}{\|\h_{IT_2,g}\|}\right|\\
&~~+\sum_{g=1}^G\bigg(\|\h_{RI,g}\|^2-\frac{|\bd_g^H\h_{RI,g}|^2}{\|\bd_g\|^2}\bigg)^{\frac{1}{2}}\\
&~~~~~~\times\bigg(\|\h_{IT_1,g}\|^2-\frac{|\h_{IT_2,g}^H\h_{IT_1,g}|^2}{\|\h_{IT_2,g}\|^2}\bigg)^{\frac{1}{2}}\Bigg)^2.
\end{aligned}
\end{equation}
The above framework can also be applied to the case where the direct path between BS$_1$ and the user is present. The only minor difference is that the definition of $\gamma$ should be modified to include the direct path $h_{RT}$, i.e.,
$
\gamma = h_{RT}+\sum_{g=1}^G[\bw_g]_1^*[\mathbf{v}_g]_1.
$

\subsection{Closed-Form Solution for Single-Connected RIS}\label{subsec:single}

For a single-connected RIS, we have $G=N$ and $G_s=1$. In this case, the scattering matrix is uniquely determined by the constraint $\bthe\h_{IT_2}=\mathbf{d}$, and its diagonal entries are given by
\begin{equation}\label{theta:single}
[\bthe^\star]_{n,n}
=
e^{\mathrm{i}\angle([\mathbf{d}]_n)-\mathrm{i}\angle([\h_{IT_2}]_n)},
\qquad n=1,2,\dots,N.
\end{equation}
Accordingly, the resulting objective value is
$$P_R^\star=\left|\h_{RI}^H\text{diag}(e^{\mathrm{i}\angle(\bd-\h_{IT_2})})\h_{IT_1}\right|^2.$$
Hence, all RIS degrees of freedom are exhausted by the constraint, leaving no extra flexibility to improve the desired signal and leading to poor  performance; see also discussions in Section \ref{sec:scalinglaw}.

\section{Scaling Law Analysis of the Two-Operator System}\label{sec:scalinglaw}
In this section, we characterize the fundamental limits of different RIS architectures in the two-operator system considered in Section \ref{sec:solution}. Specifically, we derive the scaling laws that quantify how the received signal power grows with the number of RIS elements for fully-, group-, and single-connected RIS,  based on the optimal values obtained in Section \ref{sec:solution}. 

Throughout this section, we assume that the RIS-user channel is i.i.d. Rayleigh fading, i.e., $\h_{RI}\sim\mathcal{CN}(\mathbf{0},\rho_{RI}\mathbf{I}_{N})$, where $\rho_{RI}$ is the channel gain. The Rayleigh fading model accounts for the rich scattering environment and the mobility of the user, which is commonly adopted in the existing literature \cite{RIS_tutorial}.

The BS-RIS channels are typically modeled as Rician fading, capturing the presence of a dominant LoS component due to their quasi-static nature. For analytical tractability, we characterize in the following the received signal power under both Rayleigh fading and LoS BS-RIS channels, which correspond to two extreme cases of Rician fading. The  Rician BS-RIS channels are evaluated in the simulations, and the results exhibit the same scaling behavior as these two extreme cases; see Section \ref{sec:simulation}.

The following theorem characterizes the scaling law under Rayleigh fading BS-RIS channels, modeled as $\h_{IT_l}\sim\mathcal{CN}(\mathbf{0},\rho_{IT_l}\mathbf{I}_N)$, where $\rho_{IT_l}$  denotes the channel gain,~$l\in\{1,2\}$.
 \begin{theorem}[Expected Received Signal Power Under Rayleigh Fading BS-RIS Channels]\label{scalinglaw}
Assume that $\h_{RI}\sim\mathcal{CN}(\mathbf{0},\rho_{RI}\mathbf{I}_{N})$ and $\h_{IT_l}\sim\mathcal{CN}(\mathbf{0},\rho_{IT_l}\mathbf{I}_{N}),~l\in\{1,2\}$, all of which are independent.  The following results hold for the optimal received signal power of fully-, group-, and single-connected RIS-aided system.
\begin{itemize}
\item[(i)] For fully-connected RIS, 
$$
\mathbb{E}[P_R^\star]
\hspace{-0.05cm}=\hspace{-0.1cm}\left((N-1)^2\hspace{-0.05cm}+\hspace{-0.05cm}\frac{\pi}{2}\left(\frac{\Gamma(N-\frac{1}{2})}{\Gamma(N-1)}\right)^2+1\right)\rho_{RI}\rho_{IT_1}.
$$
\item[(ii)] For group-connected RIS with group size $G_s\geq 2$,
$$
\begin{aligned}
\mathbb{E}[P_R^\star]
=\bigg(&G(G-1)\left(\frac{\Gamma(G_s-\frac{1}{2})}{\Gamma(G_s-1)}\right)^4\\
&+\sqrt{\pi}G\frac{\Gamma(G+\frac{1}{2})}{\Gamma(G)}\left(\frac{\Gamma(G_s-\frac{1}{2})}{\Gamma(G_s-1)}\right)^2\\
&+G(G_s-1)^2+G\bigg)\rho_{RI}\rho_{IT_1}.
\end{aligned}$$
\item[(iii)] For single-connected RIS,
$$\mathbb{E}[P_R^\star]=N{\rho_{RI}\rho_{IT_1}}.$$

\end{itemize}
\end{theorem}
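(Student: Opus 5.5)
Starting from the closed-form optimal value \eqref{group:optvalue}, I will write it in the unitary-completion coordinates as
\[
P_R^\star=\Big(|\gamma|+\textstyle\sum_{g=1}^G a_g b_g\Big)^2,
\]
where $\gamma=\sum_g [\bw_g]_1^*[\mathbf{v}_g]_1$, $a_g=\|[\bw_g]_{2:G_s}\|$ and $b_g=\|[\mathbf{v}_g]_{2:G_s}\|$. The fully-connected case is then $G=1$. The first step is to pin down the distributions.

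Since $\h_{RI}\sim\mathcal{CN}(\mathbf{0},\rho_{RI}\mathbf{I}_N)$ is independent of the fixed $\bd$, unitary invariance gives $\bw_g=\bU(\bd_g)^H\h_{RI,g}\sim\mathcal{CN}(\mathbf{0},\rho_{RI}\mathbf{I}_{G_s})$, with independent entries across $g$. This is also why the result does not depend on $\bd$. For $\mathbf{v}_g$ I will condition on $\h_{IT_2,g}$. The completion $\bU(\h_{IT_2,g})$ is then deterministic and independent of $\h_{IT_1,g}$, so $\mathbf{v}_g\mid\h_{IT_2}\sim\mathcal{CN}(\mathbf{0},\rho_{IT_1}\mathbf{I}_{G_s})$. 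Because this law does not depend on the condition, $\mathbf{v}_g$ is in fact independent of $\h_{IT_2}$ and i.i.d.\ across $g$. Consequently all of the quantities $[\bw_g]_1$, $[\mathbf{v}_g]_1$, $a_g^2/\rho_{RI}$ and $b_g^2/\rho_{IT_1}$ are mutually independent. Here $a_g^2/\rho_{RI}$ and $b_g^2/\rho_{IT_1}$ are Gamma$(G_s-1,1)$, i.e.\ $\tfrac12\chi^2_{2(G_s-1)}$. The key moments are
\[
\mathbb{E}[a_g]=\sqrt{\rho_{RI}}\,\frac{\Gamma(G_s-\frac12)}{\Gamma(G_s-1)},\qquad \mathbb{E}[a_g^2]=(G_s-1)\rho_{RI},
\]
and the analogous expressions hold for $b_g$.

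Next I will expand
\[
\mathbb{E}[P_R^\star]=\mathbb{E}|\gamma|^2+2\,\mathbb{E}|\gamma|\sum_g\mathbb{E}[a_g b_g]+\mathbb{E}\Big(\textstyle\sum_g a_gb_g\Big)^2 ,
\]
using the independence of $\gamma$ from the $a,b$ terms. For the first term, $\gamma$ is a sum of $G$ i.i.d.\ zero-mean products, so $\mathbb{E}|\gamma|^2=G\rho_{RI}\rho_{IT_1}$, which yields the ``$+G$'' term. The last term equals
\[
G\,\mathbb{E}[a^2]\,\mathbb{E}[b^2]+G(G-1)\big(\mathbb{E}a\big)^2\big(\mathbb{E}b\big)^2,
\]
which produces $G(G_s-1)^2$ and the quartic Gamma-ratio term. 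The middle term needs $\mathbb{E}|\gamma|$.

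Computing $\mathbb{E}|\gamma|$ is the main obstacle, since $\gamma$ is a sum of products of Gaussians and not itself Gaussian. I will handle it by conditioning on $\mathbf{v}:=([\mathbf{v}_g]_1)_g$. Given $\mathbf{v}$, $\gamma\sim\mathcal{CN}(0,\rho_{RI}\|\mathbf{v}\|^2)$, so
\[
\mathbb{E}[|\gamma|\mid\mathbf{v}]=\tfrac{\sqrt\pi}{2}\sqrt{\rho_{RI}}\,\|\mathbf{v}\|.
\]
Moreover $\|\mathbf{v}\|^2/\rho_{IT_1}\sim$ Gamma$(G,1)$, so $\mathbb{E}\|\mathbf{v}\|=\sqrt{\rho_{IT_1}}\,\Gamma(G+\frac12)/\Gamma(G)$. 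Multiplying this by $2G\,\mathbb{E}a\,\mathbb{E}b$ gives exactly
\[
\sqrt\pi\, G\,\frac{\Gamma(G+\frac12)}{\Gamma(G)}\Big(\frac{\Gamma(G_s-\frac12)}{\Gamma(G_s-1)}\Big)^2\rho_{RI}\rho_{IT_1},
\]
which completes (ii). Setting $G=1$ and $\Gamma(3/2)/\Gamma(1)=\sqrt\pi/2$ then gives (i).

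For (iii), $\bthe^\star$ in \eqref{theta:single} is a diagonal phase matrix that depends only on $\bd$ and $\h_{IT_2}$, so it is independent of $\h_{RI}$ and $\h_{IT_1}$. Conditioning on it, $h=\h_{RI}^H\bthe^\star\h_{IT_1}=\sum_n[\h_{RI}]_n^*e^{\mathrm{i}\phi_n}[\h_{IT_1}]_n$ is a sum of $N$ independent zero-mean terms, each with second moment $\rho_{RI}\rho_{IT_1}$. Hence $\mathbb{E}|h|^2=N\rho_{RI}\rho_{IT_1}$.
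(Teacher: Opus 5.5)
Your proposal is correct and follows essentially the same route as the paper's Appendix A. The paper uses the same split of $P_R^\star=(|\gamma|+\sum_g a_g b_g)^2$ into $T_1+T_2+T_3$, the same independence and chi-square facts (packaged there as Lemma 1), and the same conditioning trick for $\mathbb{E}|\gamma|$; the only difference is that you condition on the BS--RIS side rather than the RIS--user side, which is immaterial by symmetry. It likewise obtains (i) by setting $G=1$ and proves (iii) by the same conditioning argument.
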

\begin{proof}
See Appendix \ref{proofscalinglaw}.
\end{proof}

We further characterize the scaling law under LoS BS–RIS channels. For simplicity, we assume that the RIS is deployed as a uniform linear array (ULA) with half-wavelength antenna spacing. The BS-RIS channels are then modeled as $$\h_{IT_l}=\sqrt{\rho_{IT_l}}\mathbf{a}(\theta_{IT_l}),$$ where $\mathbf{a}(\theta)=[1,e^{-\mathrm{i}\pi\sin\theta},\dots,e^{-\mathrm{i}(N-1)\pi\sin\theta}]^T$ denotes the steering vector corresponding to angle $\theta$. The angle $\theta_{IT_l}\in(-\frac{\pi}{2},\frac{\pi}{2})$ represents the angle of arrival (AOA) from BS$_l$ to RIS, and $\rho_{IT_l}$ denotes the channel gain, $l\in\{1,2\}$.
\begin{theorem}[Expected Received Signal Power Under LoS BS-RIS Channels]\label{scalinglaw2}
Assume that $\h_{RI}\sim\mathcal{CN}(\mathbf{0},\rho_{RI}\mathbf{I}_{N})$ and $\h_{IT_l}=\mathbf{a}(\theta_{IT_l})$, where $\theta_{IT_l}\in(-\frac{\pi}{2},\frac{\pi}{2}),~l\in\{1,2\},$ with $\theta_{IT_1}\neq \theta_{IT_2}$. Let $\mu_l=\pi\sin\theta_{IT_l},~l\in\{1,2\},$ and $\Delta\mu=\mu_2-\mu_1$. In addition, define the following quantity that is related to the group size $G_s$ and $\Delta\mu$:
\begin{equation}\label{def:L_G}
L_{G_s,\Delta\mu}:=\frac{1}{G_s}\left(\frac{\sin\frac{G_s\Delta\mu}{2}}{\sin\frac{\Delta\mu}{2}}\right)^2.
\end{equation} Then, the following results hold for the optimal received signal power of fully-, group-, and single-connected RIS-aided system.
\begin{itemize}
\item[(i)] For fully-connected RIS, 
$$
\begin{aligned}
\mathbb{E}[P_R^\star]
=&\rho_{RI}\rho_{IT_1}\bigg((N-1)(N-L_{N,\Delta\mu})\\
+&\sqrt{\pi(N-L_{N,\Delta\mu})L_{N,\Delta\mu}}\frac{\Gamma(N-\frac{1}{2})}{\Gamma(N-1)}+L_{N,\Delta\mu}\bigg).
\end{aligned}$$
\item[(ii)] For group-connected RIS with group size $G_s\geq 2$,
$$
\begin{aligned}
\hspace{-0.7cm}\mathbb{E}[P_R^\star]=&\bigg(G(G-1)\left(\frac{\Gamma(G_s-\frac{1}{2})}{\Gamma(G_s-1)}\right)^2(G_s-L_{G_s,\Delta\mu})\\
&+G\sqrt{\pi G(G_s-L_{G_s,\Delta\mu})L_{G_s,\Delta\mu}}\frac{\Gamma(G_s-\frac{1}{2})}{\Gamma(G_s-1)}\\
&+G(G_s-1)(G_s-L_{G_s,\Delta\mu})+GL_{G_s,\Delta\mu}\bigg){\rho_{RI}\rho_{IT_1}}.
\end{aligned}$$
\item[(iii)] For single-connected RIS,
$$\mathbb{E}[P_R^\star]=N{\rho_{RI}\rho_{IT_1}}.$$

\end{itemize}
\end{theorem}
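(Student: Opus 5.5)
Starting from the closed-form optimum \eqref{group:optvalue} (and its fully-connected special case $G=1$), we average over the Rayleigh vector $\h_{RI}$ only, since the BS-RIS channels are now deterministic. The statement writes $\h_{IT_l}=\mathbf{a}(\theta_{IT_l})$, but the result carries $\rho_{IT_1}$, so we keep $\h_{IT_l}=\sqrt{\rho_{IT_l}}\mathbf{a}(\theta_{IT_l})$. Then $\rho_{IT_2}$ cancels from every normalized projection and $\rho_{IT_1}$ factors out.

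\emph{Step 1: the BS-RIS quantities are deterministic.} For group $g$, $\h_{IT_l,g}=\sqrt{\rho_{IT_l}}e^{-\mathrm{i}(g-1)G_s\mu_l}\mathbf{a}_{G_s}(\theta_{IT_l})$, so the geometric-sum identity gives
\[
\frac{|\h_{IT_2,g}^H\h_{IT_1,g}|^2}{\|\h_{IT_2,g}\|^2}=\rho_{IT_1}\frac{1}{G_s}\left(\frac{\sin\frac{G_s\Delta\mu}{2}}{\sin\frac{\Delta\mu}{2}}\right)^2=\rho_{IT_1}L_{G_s,\Delta\mu}.
\]
Hence $|[\mathbf{v}_g]_1|^2=\rho_{IT_1}L_{G_s,\Delta\mu}$ and $\|[\mathbf{v}_g]_{2:G_s}\|^2=\rho_{IT_1}(G_s-L_{G_s,\Delta\mu})$, both identical across $g$. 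The phases of $[\mathbf{v}_g]_1$ vary with $g$, but they are deterministic. The requirement $\theta_{IT_1}\neq\theta_{IT_2}$ ensures $\Delta\mu\notin 2\pi\mathbb{Z}$, so the formula is well defined.

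\emph{Step 2: the RIS-user quantities are independent.} Because $\bU(\bd_g)$ is a fixed unitary matrix, $\bw_g=\bU(\bd_g)^H\h_{RI,g}\sim\mathcal{CN}(\mathbf{0},\rho_{RI}\mathbf{I}_{G_s})$ independently across $g$. Thus $a_g:=[\bw_g]_1$ are i.i.d. $\mathcal{CN}(0,\rho_{RI})$, and $b_g:=\|[\bw_g]_{2:G_s}\|$ are independent of the $a_g$ with $2b_g^2/\rho_{RI}\sim\chi^2_{2(G_s-1)}$. In particular $\mathbb{E}[b_g]=\sqrt{\rho_{RI}}\,\Gamma(G_s-\frac12)/\Gamma(G_s-1)$ and $\mathbb{E}[b_g^2]=(G_s-1)\rho_{RI}$. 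Since $\bd$ is deterministic, the $\bd$-independence noted in the Remark follows automatically.

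\emph{Step 3: expand the square.} The optimum is $P_R^\star=(|X|+Y)^2$, where $X=\sum_g a_g^*[\mathbf{v}_g]_1\sim\mathcal{CN}(0,G\rho_{RI}\rho_{IT_1}L_{G_s,\Delta\mu})$ (the deterministic phases do not affect circular symmetry) and $Y=\sqrt{\rho_{IT_1}(G_s-L_{G_s,\Delta\mu})}\sum_g b_g$. Since $X$ and $Y$ are independent,
\[
\mathbb{E}[P_R^\star]=\mathbb{E}|X|^2+2\mathbb{E}|X|\,\mathbb{E}Y+\mathbb{E}Y^2 .
\]
The three pieces are:
\begin{itemize}
\item $\mathbb{E}|X|^2=G L_{G_s,\Delta\mu}\rho_{RI}\rho_{IT_1}$;
\item $\mathbb{E}|X|=\frac{\sqrt{\pi}}{2}\sqrt{G\rho_{RI}\rho_{IT_1}L_{G_s,\Delta\mu}}$, from the Rayleigh mean;
\item $\mathbb{E}Y^2=\rho_{IT_1}(G_s-L_{G_s,\Delta\mu})\big(G\,\mathbb{E}b^2+G(G-1)(\mathbb{E}b)^2\big)$.
\end{itemize}
Substituting these reproduces (ii) term by term: the cross term becomes $G\sqrt{\pi G(G_s-L)L}\,\Gamma(G_s-\frac12)/\Gamma(G_s-1)$, and the remaining terms match directly. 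Setting $G=1$, $G_s=N$ gives (i).

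\emph{Step 4: the single-connected case.} Part (iii) follows because $P_R^\star=|\h_{RI}^H\bthe^\star\h_{IT_1}|^2$ with $\bthe^\star$ deterministic and unitary. Conditioning gives $\mathbb{E}[P_R^\star]=\rho_{RI}\|\h_{IT_1}\|^2=N\rho_{RI}\rho_{IT_1}$.

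The main obstacle is bookkeeping rather than analysis. One check is that the deterministic per-group phases in $[\mathbf{v}_g]_1$ do not break the Gaussianity of $X$; circular symmetry of the $a_g$ handles this. The other is getting the geometric-sum identity right, so that $L_{G_s,\Delta\mu}$ comes out the same for every group.
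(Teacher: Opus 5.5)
Your proposal is correct and follows essentially the same route as the paper's proof. The paper likewise treats $\alpha_{2,g}$ and $\beta_{2,g}$ as deterministic constants obtained from the Dirichlet kernel, namely $|\alpha_{2,g}|^2=\rho_{IT_1}L_{G_s,\Delta\mu}$ and $\beta_{2,g}=\rho_{IT_1}(G_s-L_{G_s,\Delta\mu})$; it then expands $(|X|+Y)^2$ into the same three terms $T_1,T_2,T_3$ using the independence of $[\bw_g]_1$ and $\|[\bw_g]_{2:G_s}\|$, and handles the single-connected case via $\bar{\h}_{RI}^H\h_{IT_1}\sim\mathcal{CN}(0,N\rho_{RI}\rho_{IT_1})$. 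Your reading of the channel as $\sqrt{\rho_{IT_l}}\mathbf{a}(\theta_{IT_l})$ is the right one, and so is your value $\mathbb{E}|X|=\frac{\sqrt{\pi}}{2}\sqrt{G\rho_{RI}\rho_{IT_1}L_{G_s,\Delta\mu}}$: the paper's intermediate display writes $L_{G_s,\Delta\mu}$ where $\sqrt{L_{G_s,\Delta\mu}}$ is meant, though its final formula agrees with yours.
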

\begin{proof}
{\color{black}The proof follows the same procedure as that of Theorem \ref{scalinglaw}; see Appendix C in the supplemental material. }
\end{proof}
\begin{remark}[Large-$N$ Scaling Law]\label{remark:scaling}
From Theorems \ref{scalinglaw} and \ref{scalinglaw2}, the expected received signal power exhibits distinct asymptotic scaling behaviors for different RIS architectures in the large-$N$ regime.
\begin{itemize}
\item[(i)] Fully-connected RIS:
$$\mathbb{E}[P_R^\star]\overset{N\to\infty}\sim N^2{\rho_{RI}\rho_{IT_1}}.$$
\item[(ii)] Group-connected RIS with group size $G_s\geq 2$: 
$$
\mathbb{E}[P_R^\star]\overset{N\to\infty}{\sim}\kappa N^2\rho_{RI}\rho_{IT_1},
$$
where the scaling coefficient $\kappa>0$ depends on the group size $G_s$. Specifically,  $\kappa=\left(\frac{1}{\sqrt{G_s}}\frac{\Gamma(G_s-\frac{1}{2})}{\Gamma(G_s-1)}\right)^4$ for Rayleigh fading BS-RIS channels, and $\kappa=\left(\frac{1}{\sqrt{G_s}}\frac{\Gamma(G_s-\frac{1}{2})}{\Gamma(G_s-1)}\right)^2$ for LoS BS-RIS channels. 
\item[(iii)] Single-connected RIS: $$\mathbb{E}[P_R^\star]=N{\rho_{RI}\rho_{IT_1}}.$$
\end{itemize}
\end{remark}
We can make the following conclusions regarding the scaling behaviors of different RIS architectures. 

For single-connected RIS, the received signal power increases \emph{linearly} with the number of RIS elements $N$ in the two-operator system.  This sharply contrasts with the single-operator scenario where the received signal power scales quadratically with $N$ \cite{RIS1}. The reason is that a single-connected RIS lacks sufficient degrees of freedom to simultaneously enhance the desired signal and maintain a fixed RIS-reflected channel for the non-serving operator, as discussed at the end of Section~\ref{sec:solution}. Interestingly, a similar linear behavior has also been reported in  \cite{Miridakis_2024_multi} and \cite{Miridakis_2026_multi_2}, though derived from different perspectives and based on different assumptions.  

For group-connected RIS with $G_s\geq 2$ (including fully-connected RIS as a special case),  the received signal power scales \emph{quadratically} with $N$ under both Rayleigh fading and LoS BS-RIS channels. The group size mainly affects the scaling coefficient, which is characterized by  $\kappa$ in Remark \ref{remark:scaling}. As shown in \cite[Eq. (56)]{BDRIS},  under Rayleigh fading, the received signal power scales quadratically in a single-operator system, with scaling coefficient $\kappa_o= \left(\frac{1}{\sqrt{G_s}}\frac{\Gamma(G_s+\frac{1}{2})}{\Gamma(G_s)}\right)^4$. It then follows that $$\mathbb{E}[P_R^\star]\overset{N\to\infty}\sim \left(1-\frac{1}{2G_s-1}\right)^4\mathbb{E}[P_R^{{o}}]$$ under Rayleigh fading, 
where $\mathbb{E}[P_R^{{o}}]$ denotes the expected received signal power in the single-operator system, and we have used the identity $\Gamma(x+1)=x\Gamma(x)$.  This implies that the same asymptotic scaling law as the single-operator case is preserved in the two-operator system, with only a multiplicative loss factor of $(1-\frac{1}{2G_s-1})^4$. Since this factor approaches one as $G_s$ increases, the asymptotic performance approaches that of the single-operator case for large $G_s$. 


To conclude, in the two-operator systems, the performance of conventional single-connected RIS degrades dramatically from $\mathcal{O}(N^2)$ to $\mathcal{O}(N)$ due to its limited flexibility. In contrast, a group-connected RIS with group size two, which is among the simplest BD-RIS architectures, already provides sufficient flexibility to preserve the desirable $\mathcal{O}(N^2)$ gain as in the single-operator scenario.

{\section{Optimization and Scaling Laws for General Multi-Operator Systems}\label{sec:multioperator}}
In the previous sections, we considered a two-operator system with only one non-serving operator. In this section, we extend the proposed framework to the general multi-operator setting in \eqref{problem00}. 

We decompose $\h_{IT_l}$ and $\d_{IT_l}$ as $
\h_{IT_l} = [\h_{IT_l,1}^T,\h_{IT_l,2}^T,\dots,\h_{IT_l,G}^T]^T$ with  $\h_{IT_l,g}\in\C^{G_s\times 1}$,
and
$\d_{IT_l} = [\d_{IT_l,1}^T,\d_{IT_l,2}^T,\dots,\d_{IT_l,G}^T]^T$ with $\d_{IT_l,g}\in\C^{G_s\times 1}$.
The constraints \eqref{con01} and \eqref{con02} can then be rewritten as 
\begin{equation}\label{constraint:linear_multioperator}
\bthe_g\mathbf{H}_g=\mathbf{D}_g,~g=1,2,\dots, G,
\end{equation}
where $\bH_g=[\h_{IT_2,g},\h_{IT_3,g},\dots,\h_{IT_L,g}]\in\C^{G_s\times (L-1)}$ and $\mathbf{D}_g=[\d_{IT_2,g},\d_{IT_3,g},\dots,\d_{IT_L,g}]\in\C^{G_s\times (L-1)}$. Due to the unitary constraint on $\bthe_g$, the problem is feasible if and only if  
\begin{equation}\label{feasible_condition}
\bH_g^H\bH_g=\bD_g^H\bD_g,~g=1,2,\dots, G.
\end{equation}
We next focus on the nontrivial case that \eqref{feasible_condition}  is satisfied. 
\vspace{0.2cm}
\subsection{Closed-Form Solutions to Problem \eqref{problem00}}\label{subsec:solution_general}
In this subsection, we give the closed-form solutions to problem \eqref{problem00}. We consider two cases, $ G_s\geq L$ (which includes fully-connected RIS as a special case) and $G_s< L$ (which includes single-connected RIS as a special case),  separately.
\subsubsection{$G_s\geq L$} The technique developed in Section \ref{subsec:technique} can be extended to solve \eqref{problem00} for $G_s\geq L$. First, we extend Definition \ref{definition:UC_vector} to the matrix case. 
\begin{definition}[Unitary completion of a matrix]\label{def:UC_matrix}
Given a full-rank matrix $\mathbf{X}\in\mathbb{C}^{n\times m}$ with $m\leq n $, a unitary matrix $\bU\in\C^{n\times n}$ is called a unitary completion of $\bX$ if it satisfies  $$\mathbf{U}\mathbf{E}_m=\bX(\bX^H\bX)^{-\frac{1}{2}},$$ where $\mathbf{E}_m=[\mathbf{e}_1,\mathbf{e}_2,\dots,\mathbf{e}_m]\in\R^{n\times m}$. Such a matrix is denoted by $\bU(\bX)$.
\end{definition}
With the above definition, we can rewrite the constraints in problem \eqref{problem00} into more tractable forms. 
\begin{proposition}\label{pro4}
Assume that $\bH_g^H\bH_g=\bD_g^H\bD_g$ for $g=1,2,\dots, G_s$, and are of full rank. Let $\bU(\bH_g)$ and $\bU(\bD_g)$ be any unitary completions of $\bH_g$ and $\bD_g$, respectively.   Constraints \eqref{con03} and \eqref{constraint:linear_multioperator} are satisfied if and only if $\bthe_g$ can be expressed as 
\begin{equation*}
\bthe_g=\bU(\mathbf{D}_g)\left[\begin{matrix}\mathbf{I}_{L-1}&\mathbf{0}\\\mathbf{0}&\bar{\bthe}_g\end{matrix}\right]\bU(\bH_g)^H,~g=1,2,\dots, G,
\end{equation*}
where $\bar{\bthe}_g^H\bar{\bthe}_g= \mathbf{I}_{G_s-L+1}$. 
\end{proposition}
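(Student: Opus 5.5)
The plan is to mirror the proof of Proposition~\ref{pro1}, replacing the unit vector $\x/\|\x\|$ by the orthonormal frame $\bH_g(\bH_g^H\bH_g)^{-\frac{1}{2}}$. The only new ingredient is that the Gram matrices of $\bH_g$ and $\bD_g$ coincide. Fix $g$ and write $\bA:=\bH_g^H\bH_g=\bD_g^H\bD_g$, which is positive definite by the full-rank assumption. By Definition~\ref{def:UC_matrix}, $\bU(\bH_g)\mathbf{E}_{L-1}=\bH_g\bA^{-\frac{1}{2}}$ and $\bU(\bD_g)\mathbf{E}_{L-1}=\bD_g\bA^{-\frac{1}{2}}$. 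Since both completions are unitary, we obtain the analogue of \eqref{U_property}:
\[
\bU(\bH_g)^H\bH_g=\mathbf{E}_{L-1}\bA^{\frac{1}{2}},\qquad \bU(\bD_g)^H\bD_g=\mathbf{E}_{L-1}\bA^{\frac{1}{2}}.
\]
To verify the first identity, multiply $\bU(\bH_g)\mathbf{E}_{L-1}=\bH_g\bA^{-\frac{1}{2}}$ on the right by $\bA^{\frac{1}{2}}$ and on the left by $\bU(\bH_g)^H$. The second follows in the same way.

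\emph{Necessity.} Assume $\bthe_g$ is unitary and $\bthe_g\bH_g=\bD_g$, and set $\widetilde{\bthe}_g:=\bU(\bD_g)^H\bthe_g\bU(\bH_g)$, which is unitary. Its first $L-1$ columns are
\[
\widetilde{\bthe}_g\mathbf{E}_{L-1}=\bU(\bD_g)^H\bthe_g\bH_g\bA^{-\frac{1}{2}}=\bU(\bD_g)^H\bD_g\bA^{-\frac{1}{2}}=\mathbf{E}_{L-1}\bA^{\frac{1}{2}}\bA^{-\frac{1}{2}}=\mathbf{E}_{L-1}.
\]
Hence $\widetilde{\bthe}_g=\left[\begin{smallmatrix}\mathbf{I}_{L-1}&\mathbf{B}\\\mathbf{0}&\bar{\bthe}_g\end{smallmatrix}\right]$ for some blocks $\mathbf{B}$ and $\bar{\bthe}_g$. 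The block $\mathbf{B}$ must vanish because the rows of a unitary matrix have unit norm: the first $L-1$ diagonal blocks of $\widetilde{\bthe}_g\widetilde{\bthe}_g^H=\mathbf{I}$ give $\mathbf{I}_{L-1}+\mathbf{B}\mathbf{B}^H=\mathbf{I}_{L-1}$, so $\mathbf{B}=\mathbf{0}$. Then $\bar{\bthe}_g$ is unitary of size $G_s-L+1$, and this size is positive because $G_s\geq L$. Undoing the conjugation by the two completions yields the claimed form.

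\emph{Sufficiency.} If $\bthe_g$ has the stated form, it is a product of three unitary matrices and hence unitary. For the linear constraint,
\[
\bthe_g\bH_g=\bU(\bD_g)\left[\begin{smallmatrix}\mathbf{I}_{L-1}&\mathbf{0}\\\mathbf{0}&\bar{\bthe}_g\end{smallmatrix}\right]\mathbf{E}_{L-1}\bA^{\frac{1}{2}}=\bU(\bD_g)\mathbf{E}_{L-1}\bA^{\frac{1}{2}}=\bD_g\bA^{-\frac{1}{2}}\bA^{\frac{1}{2}}=\bD_g .
\]
Applying both directions to each $g$ separately proves the equivalence.

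The only step that needs care beyond Proposition~\ref{pro1} is the one the scalar proof hides. There the first column equals $\mathbf{e}_1$ because $\|\x\|=\|\y\|$. Here that role is played by the equality of Gram matrices, and it is essential that \emph{the same} square root $\bA^{\frac{1}{2}}$ normalizes both $\bH_g$ and $\bD_g$, so that the factors cancel exactly. I would state this explicitly and note that full rank guarantees $\bA^{-\frac{1}{2}}$ exists. The index range should read $g=1,\dots,G$; it appears as $G_s$ in the statement, presumably a typo.
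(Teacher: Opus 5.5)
Your proof is correct and is essentially the paper's own argument in Appendix D: conjugate $\bthe_g$ by the two unitary completions, use $\bU(\bD_g)^H\bD_g=\mathbf{E}_{L-1}(\bD_g^H\bD_g)^{\frac{1}{2}}$ together with $\bH_g^H\bH_g=\bD_g^H\bD_g$ to show the first $L-1$ columns equal $\mathbf{E}_{L-1}$, force the off-diagonal block to vanish by unitarity, and check sufficiency directly. Your block identity $\mathbf{I}_{L-1}+\mathbf{B}\mathbf{B}^H=\mathbf{I}_{L-1}$ is the one that actually follows from $\widetilde{\bthe}_g\widetilde{\bthe}_g^H=\mathbf{I}_{G_s}$ (the paper writes $\bB_g^H\bB_g$, which is harmless), and you are right that the index range should be $g=1,\dots,G$.
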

\hspace{-0.35cm}The proof of Proposition \ref{pro4} follows the same steps as that of Proposition \ref{pro1}, which relies on the definition and properties of the unitary completion given in Definition \ref{def:UC_matrix}. For completeness, we provide a proof in Appendix D of the supplemental material. 

Applying Proposition \ref{pro4} and rewriting the objective function of \eqref{problem00} as in \eqref{problem:group}, problem  \eqref{problem00} transforms to    \begin{equation}\label{problem6}
\begin{aligned}
\max_{\{\bar{\bthe}_g\}}~&\left|\sum_{g=1}^G\left([\bw_g]_{1:L-1}^H[{\mathbf{v}}_g]_{1:L-1}+{[{\bw}_g]_{L:G_s}^H\bar{\bthe}_g[{\mathbf{v}}_g]_{L:G_s}}\right)\right|^2\\
\text{s.t. }~&\bar{\bthe}_g^H\bar{\bthe}_g=\mathbf{I}_{G_s-L+1},~g=1,2,\dots,G,
\end{aligned}
\end{equation}
where $\bw_g=\bU(\bD_g)^H\h_{RI,g}$ and $\v_g=\bU(\bH_g)^H\h_{IT_1,g}$ for all $g=1,2,\dots,G$. 
Similar to the procedures in Section \ref{subsec:group}, we get the following optimal solutions of problem \eqref{problem6}:
  $$\bar{\bthe}_g^\star=e^{\mathrm{i}\angle\gamma}\bU([\bw_g]_{L:G_s})\left[\begin{matrix}1&\mathbf{0}\\\mathbf{0}&\widetilde{\bthe}_g\end{matrix}\right]\bU_g([\mathbf{v}_g]_{L:G_s})^H,$$
where $\gamma=\sum_{g=1}^G[\bw_g]_{1:L-1}^H[{\mathbf{v}}_g]_{1:L-1}$ and $\widetilde{\bthe}_g$ is any $G_s-L$ dimensional unitary matrix, i.e., $\widetilde{\bthe}_g^H\widetilde{\bthe}_g=\mathbf{I}_{G_s-L}$,~$g=1,2,\dots,G$. The optimal value is given by
\begin{equation}\label{group:multioperator}
\begin{aligned} 
P_R^\star=&\Bigg(\left|\sum_{g=1}^G\h_{RI,g}^H\bD_g(\bD_g^H\bD_g)^{-\frac{1}{2}}(\bH_g^H\bH_g)^{-\frac{1}{2}}\bH_g^H\h_{IT_1,g}\right|\\
&~~+\sum_{g=1}^G\big(\|\h_{RI,g}\|^2-\|(\bD_g^H\bD_g)^{-\frac{1}{2}}\bD_g^H\h_{RI,g}\|^2\big)^{\frac{1}{2}}\\
&~~~~~~\times\big(\|\h_{IT_1,g}\|^2-\|(\bH_g^H\bH_g)^{-\frac{1}{2}}\bH_g^H\h_{IT_1,g}\|^2\big)^{\frac{1}{2}}\Bigg)^2,
\end{aligned}
\end{equation}
where we have used $[\bw_{g}]_{1:L-1}=(\bD_g^H\bD_g)^{-\frac{1}{2}}\bD_g^H\h_{RI,g}$ and $[\v_g]_{1:L-1}=(\bH_g^H\bH_g)^{-\frac{1}{2}}\bH_g^H\h_{IT_1,g}$, which follow from Definition \ref{def:UC_matrix} and the definitions of $\bw_g$ and $\v_g$.
 \subsubsection{$G_s<L$}
If $G_s<L$, the problem in \eqref{problem00} has a unique feasible solution under \eqref{feasible_condition},  given by
\begin{equation}\label{solution:gs<l}
\bthe_g^\star=\bD_g\bH_g^H(\bH_g\bH_g^H)^{-1},~g=1,2,\dots,G,
\end{equation}
which solves \eqref{constraint:linear_multioperator} and is always unitary because of the feasibility condition \eqref{feasible_condition}.
The corresponding objective value is 
\begin{equation}\label{multi:Gs<L}
P_R^\star=\big|\sum_{g=1}^G\h_{RI,g}^H\bD_g\bH_g^H(\bH_g\bH_g^H)^{-1}\h_{IT_1,g}\big|^2.
\end{equation}

\vspace{-0.5cm}
\subsection{Scaling Law Analysis}\label{subsec:scalinglaw}
We now provide scaling law analysis of different RIS architectures in the general multi-operator system, in order to characterize how the number of operators affects the scaling behavior.

For analytical tractability, we assume that both the BS-RIS channels and the RIS-user channel follow i.i.d. Rayleigh fading, as in Theorem \ref{scalinglaw}. The LoS BS-RIS case exhibits the same scaling behavior, but does not admit a clean formula as in Theorem \ref{scalinglaw2}. We will numerically investigate the LoS and Rician BS-RIS channels in the simulations. The following theorem presents the scaling law of the received signal power for the considered multi-operator systems. 
 \begin{theorem}[Expected Received Signal Power in Multi-Operator Systems]\label{scalinglaw3}
Assume that $\h_{RI}\sim\mathcal{CN}(\mathbf{0},\rho_{RI}\mathbf{I}_{N})$ and $\h_{IT_l}\sim\mathcal{CN}(\mathbf{0},\rho_{IT_l}\mathbf{I}_{N}),~l\in\{1,2,\dots,L\}$, all of which are independent. The following results hold for the optimal received signal power of fully- and group-connected RIS-aided system.
\begin{itemize}
\item[(i)] For fully-connected RIS, 
$$
\begin{aligned}
\hspace{-0.15cm}\mathbb{E}[P_R^\star]\hspace{-0.05cm}
=\hspace{-0.1cm}\bigg((N\hspace{-0.05cm}-\hspace{-0.05cm}L\hspace{-0.05cm}+\hspace{-0.05cm}1)^2+&\sqrt{\pi}\frac{\Gamma(L\hspace{-0.05cm}-\hspace{-0.05cm}\frac{1}{2})}{\Gamma(L\hspace{-0.05cm}-\hspace{-0.05cm}1)}\left(\frac{\Gamma(N\hspace{-0.05cm}-\hspace{-0.05cm}L+\frac{3}{2})}{\Gamma(N\hspace{-0.05cm}-\hspace{-0.05cm}L+1)}\right)^{\hspace{-0.05cm}2}\\
&+L-1\bigg)\rho_{RI}\rho_{IT_1}.
\end{aligned}$$
\item[(ii)] For group-connected RIS with group size $G_s\geq L$,
$$
\begin{aligned}
\hspace{-0.15cm}\mathbb{E}[P_R^\star]
=\hspace{-0.1cm}\bigg(&G(G-1)\left(\frac{\Gamma(G_s-L+\frac{3}{2})}{\Gamma(G_s-L+1)}\right)^4\\
&\hspace{-0.05cm}+\sqrt{\pi}G\frac{\Gamma(G(L-1)+\frac{1}{2})}{\Gamma(G(L-1))}\left(\frac{\Gamma(G_s-L+\frac{3}{2})}{\Gamma(G_s-L+1)}\right)^{\hspace{-0.05cm}2}\\
&+G(G_s-L+1)^2+G(L-1)\bigg)\rho_{RI}\rho_{IT_1}.
\end{aligned}$$
\item[(iii)] For group-connected RIS with group size $G_s<L$,
$$\mathbb{E}[P_R^\star]=N{\rho_{RI}\rho_{IT_1}}.$$
\end{itemize} 
\end{theorem}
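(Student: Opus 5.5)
We work directly with the closed-form optimal values \eqref{group:multioperator} (for $G_s\geq L$) and \eqref{multi:Gs<L} (for $G_s<L$). Normalize $\rho_{RI}=\rho_{IT_1}=1$ by scaling. Part (i) is the special case $G=1$, $G_s=N$ of (ii), so it suffices to prove (ii) and (iii).

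\emph{Distributional reduction for $G_s\geq L$.} For each $g$, condition on $\bH_g$ and $\bD_g$. Then $\bU(\bD_g)$ and $\bU(\bH_g)$ are fixed unitaries, and by rotational invariance $\bw_g=\bU(\bD_g)^H\h_{RI,g}$ and $\v_g=\bU(\bH_g)^H\h_{IT_1,g}$ are i.i.d. $\mathcal{CN}(\mathbf{0},\mathbf{I}_{G_s})$. They are independent of each other, of the conditioning variables, and across $g$. Hence the law of $P_R^\star$ does not depend on $\{\bH_g,\bD_g\}$; this also confirms the remark that the choice of $\d_{IT_l}$ is irrelevant. Define
\[
a_g=[\bw_g]_{1:L-1}^H[\v_g]_{1:L-1},\qquad b_g=\|[\bw_g]_{L:G_s}\|\,\|[\v_g]_{L:G_s}\|,
\]
so that $P_R^\star=\left(\left|\sum_g a_g\right|+\sum_g b_g\right)^2$. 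The families $\{a_g\}$ and $\{b_g\}$ are mutually independent and each is i.i.d. over $g$. Expanding the square gives
\[
\mathbb{E}[P_R^\star]=\mathbb{E}\left|\sum_g a_g\right|^2+2\,\mathbb{E}\left|\sum_g a_g\right|\,\mathbb{E}\Big[\sum_g b_g\Big]+\mathbb{E}\Big(\sum_g b_g\Big)^2.
\]

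\emph{Moment computations.} Let $K=G_s-L+1$. Then $\|[\bw_g]_{L:G_s}\|^2\sim\frac12\chi^2_{2K}$, i.e., Gamma$(K,1)$, so $\mathbb{E}\|[\bw_g]_{L:G_s}\|=\Gamma(K+\frac12)/\Gamma(K)$ and $\mathbb{E}b_g^2=K^2$. Therefore
\[
\mathbb{E}\Big(\sum_g b_g\Big)^2=GK^2+G(G-1)\left(\frac{\Gamma(K+\frac12)}{\Gamma(K)}\right)^4,
\]
which gives the first and third terms of (ii).

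Next, $\mathbb{E}|a_g|^2=L-1$ and the $a_g$ are zero-mean and independent, so $\mathbb{E}|\sum_g a_g|^2=G(L-1)$. For the cross term, we need $\mathbb{E}|\sum_g a_g|$. Conditioned on all $[\v_g]_{1:L-1}$, the sum $\sum_g a_g$ is $\mathcal{CN}(0,S)$ with $S=\sum_g\|[\v_g]_{1:L-1}\|^2\sim$ Gamma$(G(L-1),1)$. Hence
\[
\mathbb{E}\left|\sum_g a_g\right|=\frac{\sqrt\pi}{2}\,\mathbb{E}\sqrt{S}=\frac{\sqrt\pi}{2}\,\frac{\Gamma(G(L-1)+\frac12)}{\Gamma(G(L-1))}.
\]
Multiplying by $2\,\mathbb{E}[\sum_g b_g]=2G\left(\Gamma(K+\frac12)/\Gamma(K)\right)^2$ gives exactly the second term of (ii). Setting $G=1$, $G_s=N$ kills the $G(G-1)$ term and yields (i).

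\emph{Case $G_s<L$.} Here the unique solution \eqref{solution:gs<l} is unitary and fixed given $\{\bH_g,\bD_g\}$. Then $h=\sum_g\h_{RI,g}^H\bthe_g^\star\h_{IT_1,g}=\h_{RI}^H\bthe^\star\h_{IT_1}$, where $\bthe^\star$ is block-unitary and independent of $\h_{RI}$ and $\h_{IT_1}$. Conditioned on $\h_{IT_1}$ and $\bthe^\star$, $h\sim\mathcal{CN}(0,\|\bthe^\star\h_{IT_1}\|^2)=\mathcal{CN}(0,\|\h_{IT_1}\|^2)$. Hence $\mathbb{E}|h|^2=\mathbb{E}\|\h_{IT_1}\|^2=N$.

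\emph{Expected obstacle.} The main obstacle is justifying the conditioning step: the unitary completions are measurable functions of $\bH_g$ and $\bD_g$, which are independent of $\h_{RI}$ and $\h_{IT_1}$. In the case $G_s\geq L$, full rank of $\bH_g$ holds almost surely, as Proposition~\ref{pro4} requires. One must also check that the first $L-1$ and the remaining $K$ coordinates of $\bw_g$ and $\v_g$ are independent, which follows from the isotropy of the Gaussian vectors.
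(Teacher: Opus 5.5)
Your proposal is correct and follows essentially the same route as the paper's proof in Appendix~E. Your $[\bw_g]_{1:L-1}$ and $\|[\bw_g]_{L:G_s}\|^2$ (and their $\v_g$ counterparts) are exactly the paper's $\boldsymbol{\alpha}_{1,g}$ and $\beta_{1,g}$ (resp.\ $\boldsymbol{\alpha}_{2,g}$ and $\beta_{2,g}$). The three-term expansion, the Gamma-moment evaluations, the conditional-Gaussian computation of $\mathbb{E}|\sum_g a_g|$ and the unitary-rotation argument for $G_s<L$ all coincide with the paper's. Your conditioning on $\{\bH_g,\bD_g\}$ spells out the ``analogous to Lemma~1'' distributional step that the paper asserts without detail. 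That step implicitly requires, as you note, the $\bD_g$ to be independent of $\h_{RI}$ and $\h_{IT_1}$.
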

\begin{proof}
{\color{black}See Appendix E in the supplemental material.}
\end{proof}
\begin{remark}[Large-$N$ Scaling Law in General Multi-Operator Systems]\label{remark:scaling3}
From Theorem \ref{scalinglaw3}, the expected received signal power exhibits the following asymptotic scaling law:
\begin{itemize}
\item[(i)] Fully-connected RIS:
$$\mathbb{E}[P_R^\star]\overset{N\to\infty}\sim N^2{\rho_{RI}\rho_{IT_1}}.$$
\item[(ii)] Group-connected RIS with group size $G_s\geq L$: 
$$
\mathbb{E}[P_R^\star]\overset{N\to\infty}{\sim}\kappa N^2\rho_{RI}\rho_{IT_1},
$$
where  $\kappa=\left(\frac{1}{\sqrt{G_s}}\frac{\Gamma(G_s-L+\frac{3}{2})}{\Gamma(G_s-L+1)}\right)^4$.

\item[(iii)] Group-connected RIS with group size $G_s<L$: $$\mathbb{E}[P_R^\star]=N{\rho_{RI}\rho_{IT_1}}.$$
\end{itemize}
\end{remark}

Theorem \ref{scalinglaw3} and Remark \ref{remark:scaling3} reveal a scaling-law transition at $G_s=L$: when $G_s\geq L$, the scaling law is $\mathcal{O}(N^2)$, whereas when $G_s<L$, it reduces to $\mathcal{O}(N)$. This  generalizes the conclusion in Section \ref{sec:scalinglaw}. 
When $L=2$, Theorem \ref{scalinglaw3} reduces to Theorem \ref{scalinglaw}. 

The result demonstrates that, as the number of non-serving operators increases, a larger group size is required to achieve quadratic scaling. This is because more degrees of freedom are needed to guarantee fixed RIS-reflected channels for the non-serving operators, i.e., to satisfy  \eqref{con01}. Essentially, quadratic scaling is achieved only when the RIS retains additional flexibility for optimizing the desired signal after satisfying constraint \eqref{con01}, which occurs when $G_s \geq L$. Moreover, by comparing with the scaling law in the single-operator system~\cite[Eq. (56)]{BDRIS}, we obtain, for group-connected RIS with $G_s \geq L$,
\begin{equation}\label{wrtsingle}
\mathbb{E}[P_R^\star]\overset{N\to\infty}\sim \prod_{j=0}^{L-2}\left(1-\frac{1}{2(G_s-L+j)+3}\right)^4\mathbb{E}[P_R^o],\vspace{-0.1cm}
\end{equation}
where $\mathbb{E}[P_R^o]$ is the expected received signal power in the single-operator system. This implies that, although quadratic scaling is preserved for $G_s\geq L$, the performance degradation compared with the single-operator system becomes more severe as $L$ increases. On the other hand, for fixed $L$, the gap narrows as the group size increases.

\vspace{-0.25cm}
\section{Extensions to Multi-Antenna Multi-User Systems}\label{subsec:multiantenna}
In this section, we discuss the extensions of the proposed technique  and results to more general scenarios involving multi-antenna BSs and multiple users.

Assume that the BSs associated with all operators are equipped with $N_T$ transmit antennas. The serving operator, i.e., operator 1, serves a general multi-user system. We only assume that the total number of receive antennas on the user side is $N_R$, while both the number of users and the number of antennas at each user are left unspecified. Let $\bH_{IT_l}\in\C^{N\times N_T}$ denote the channel matrix from BS$_l$ to the RIS for $l=1,2,\dots,L$, and let $\bH_{RI}\in\C^{N_R\times N}$ denote the channel matrix from the RIS to the user side served by operator 1. The effective channel at the user(s) is then given by
$
\bH(\bthe)=\bH_{RI}\bthe\bH_{IT_1},
$
where the notation $\bH(\bthe)$ is used to indicate the dependence of the effective channel on $\bthe$. To eliminate the effect of the RIS on the non-serving operators, we still impose fixed RIS-reflected channel constraints:
$$
\bthe\bH_{IT_l}=\mathbf{D}_{IT_l}\in\C^{N\times N_T},\quad l=2,3,\dots,L. 
$$
Compared to \eqref{con01},  $\mathbf{D}_{IT_l}$ is now a matrix rather than a vector, as the BSs associated with the non-serving operators are equipped with multiple antennas.

Consider the following general utility-maximization problem:
\begin{subequations}\label{problem:general}
\begin{align}
\max_{\bthe, \mathcal V\in\mathcal X}\quad & u\left(\bH(\bthe),\mathcal V\right) \\
\text{s.t.}\quad~~
& \bthe\bH_{IT_l}=\bD_{IT_l}, ~ l=2,3,\dots, L, \label{con1:general}\\
& \bthe=\operatorname{diag}(\bthe_1,\bthe_2,\dots,\bthe_G),  \label{con2:general}\\
& \bthe_g^H\bthe_g=\bI_{G_s}, ~ g=1,2,\dots,G,
\end{align}
\end{subequations}
where $\mathcal V$ collects the transceiver variables associated with the serving operator (e.g., the precoding and combining matrices), $\mathcal X$ denotes the corresponding feasible set, and $u(\cdot,\cdot)$ is a general utility function. This formulation encompasses several important scenarios as special cases. Two representative examples are given below.
\begin{example}[Received signal power maximization for single-stream point-to-point MIMO]\label{example1}
In this case, $\mathcal V=(\mathbf{f}_t,\mathbf{f}_r)$, where $\mathbf{f}_t\in\C^{N_T\times 1}$ denotes the transmit beamformer and $\mathbf{f}_r\in\C^{N_R\times 1}$ is the  receive combiner, which satisfy the transmit power constraint $\|\mathbf{f}_t\|^2\leq P_T$ and combiner normalization constraint $\|\mathbf{f}_r\|=1$, respectively. 
The utility function is 
$
u\!\left(\bH(\bthe),\mathcal V\right)
= \left|\mathbf{f}_r^H \bH(\bthe)\mathbf{f}_t\right|^2.
$
\end{example}
\begin{example}[Sum-rate maximization for multi-user MISO]\label{example2}  In this case, each user is equipped with a single antenna, so that $N_R$ equals the number of users. 
Let $\mathcal{V}=\mathbf{F}=[\mathbf{f}_1,\mathbf{f}_2,\dots,\mathbf{f}_{N_R}] \in \mathbb{C}^{N_T\times N_R}$ be the transmit beamforming matrix, which satisfies the power constraint $\|\mathbf{F}\|_F^2 \le P_T$. 
Let $\h_k^H(\bthe)$ be the $k$-th row of $\bH(\bthe)$, the sum-rate is \[
u\!\left(\bH(\bthe),\mathcal V\right)
=
\sum_{k=1}^{N_R}
\log_2\!\left(
1+\frac{|\bh_k^H(\bthe)\mathbf{f}_k|^2}
{\sum_{j\neq k} |\bh_k^H(\bthe)\mathbf{f}_j|^2+\sigma^2}
\right),
\]
where $\sigma^2$ is the noise power.\end{example}


To solve problem \eqref{problem:general}, we first rewrite the constraints in \eqref{con1:general} and \eqref{con2:general}  in the form as \eqref{constraint:linear_multioperator}, with $\bH_g$ and $\bD_g$ defined as their multi-antenna counterparts,  i.e., $\bH_g=[\bH_{IT_2,g},\bH_{IT_3,g},\dots,\bH_{IT_L,g}]\in\C^{G_s\times N_T(L-1)}$ and $\mathbf{D}_g=[\bD_{IT_2,g},\bD_{IT_3,g},\dots,\bD_{IT_L,g}]\in\C^{G_s\times N_T(L-1)}$.  Assume that $\bH_g^H\bH_g=\bD_g^H\bD_g$ and are of full rank  for all $g=1,2,\dots,G$, and define  
$$\tau:=N_T(L-1)+1.$$
The solution to \eqref{problem:general} falls into two cases, depending on whether $G_s\ge \tau$ or $G_s<\tau$.

\subsubsection{$G_s\geq \tau$} 
By applying the decomposition in Proposition \ref{pro4}, the problem can be reformulated as\begin{subequations}\label{problem:general2}
\begin{align}
\max_{\bar{\bthe},\mathcal{V}\in\mathcal{X}}~&u(\bH(\bar{\bthe}),\mathcal{V})\\
\text{s.t. }~~~&\bar{\bthe}=\text{diag}(\bar{\bthe}_1,\bar{\bthe}_2,\dots,\bar{\bthe}_G), \\
&\bar{\bthe}_{g}^H\bar{\bthe}_{g}=\mathbf{I}_{G_s-\tau+1},~g=1,2,\dots, G.
\end{align}
\end{subequations}
In \eqref{problem:general2}, 
$$\bH(\bar{\bthe}):=\bar{\bH}_0+\bar{\bH}_1\bar{\bthe}\bar{\bH}_2,$$
where $\bar{\bH}_0=\sum_{g=1}^G[\bW_g]_{1:\tau-1,:}^H[\bV_g]_{1:\tau-1,:}$, $\bar{\bH}_1=[[\bW_1]_{\tau:G_s,:}^H,[\bW_2]_{\tau:G_s,:}^H,\dots,[\bW_G]_{\tau:G_s,:}^H]$, and $\bar{\bH}_2=[[\bV_1]_{\tau:G_s,:}^T,[\bV_2]_{\tau:G_s,:}^T,\dots,[\bV_G]_{\tau:G_s,:}^T]^T$ with $\bW_g=\bU(\bD_g)^H\bH_{RI,g}^H$ and $\bV_g=\bU(\bH_g)^H\bH_{IT_1,g}$.

Problem \eqref{problem:general2} can be interpreted as the utility maximization problem of a lower-dimensional RIS-aided system with fewer RIS elements, where $\bar{G}_s = G_s-\tau+1$. Therefore, existing algorithms developed for single-operator RIS-aided systems can be directly applied to solve \eqref{problem:general2}. 

\subsubsection{$G_s<\tau$} In this case, there exists a unique feasible solution to \eqref{problem:general}, given by \eqref{solution:gs<l}.
\begin{remark}
In the above discussions, we assume that $\{\bH_g\}_{1\leq g\leq G}$ are of full rank. When they are rank-deficient, which may occur, for example, when the channels associated with the non-serving operators are LoS and/or exhibit strong correlation across operators, the group-size threshold $\tau$ can be further reduced to 
\begin{equation}\label{newtau}
\tau=\max_{g\in\{1,2,\dots,G\}}\operatorname{rank}(\bH_g)+1.
\end{equation}

Specifically, the matrix $\bH_g\in\C^{G_s\times N_T(L-1)}$ can be factorized as $\bH_g=\bA_g\bB_g$, where $\bA_g\in\C^{G_s\times \operatorname{rank}(\bH_g)}$ and $\bB_g\in\C^{\operatorname{rank}(\bH_g)\times N_T(L-1)}$ are both full-rank. The constraint in \eqref{constraint:linear_multioperator} can then be rewritten as 
\begin{equation}\label{lowrank}
\bthe_g \bA_g=\bD_g\bB_g^H(\bB_g\bB_g^H)^{-1},~g=1,2,\dots,G. 
\end{equation}
Therefore, the dimension of the constraints is reduced from $N\times N_T(L-1)$ to $N\times \text{rank}(\bH_g)$ for $g=1,2,\dots,G$, and the threshold $\tau$ correspondingly reduces to \eqref{newtau}. 
\end{remark}
\begin{remark}[Generality of the proposed approach]
The proposed technique is not limited to the RIS optimization problem considered in this paper. It provides a general framework for handling optimization problems with coupled  unitary and linear equality constraints, 
 which may be useful for other structured optimization problems with similar  constraints.

\end{remark}
 \begin{figure*}
\subfigure[Rayleigh fading channels.]{\includegraphics[width=0.3\textwidth]{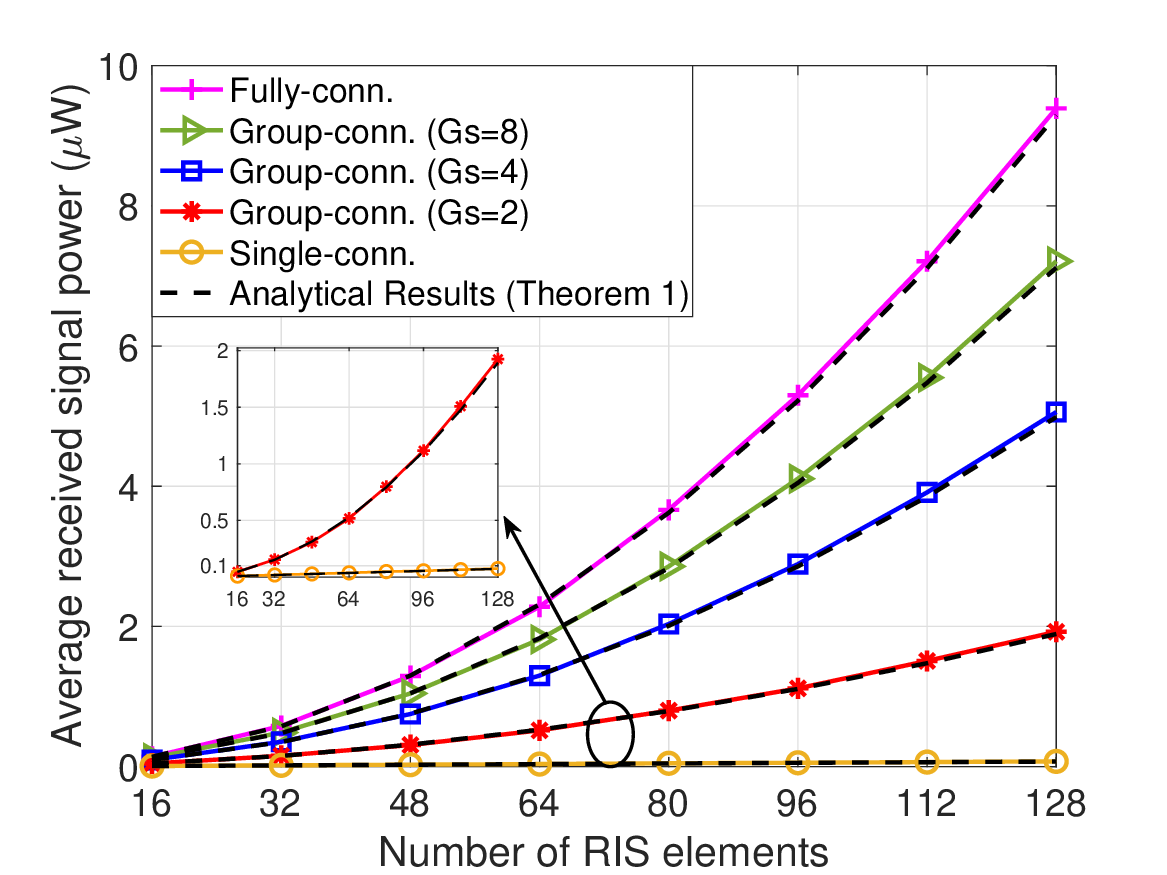}\label{operatora}}
\subfigure[LoS channels.]{\includegraphics[width=0.3\textwidth]{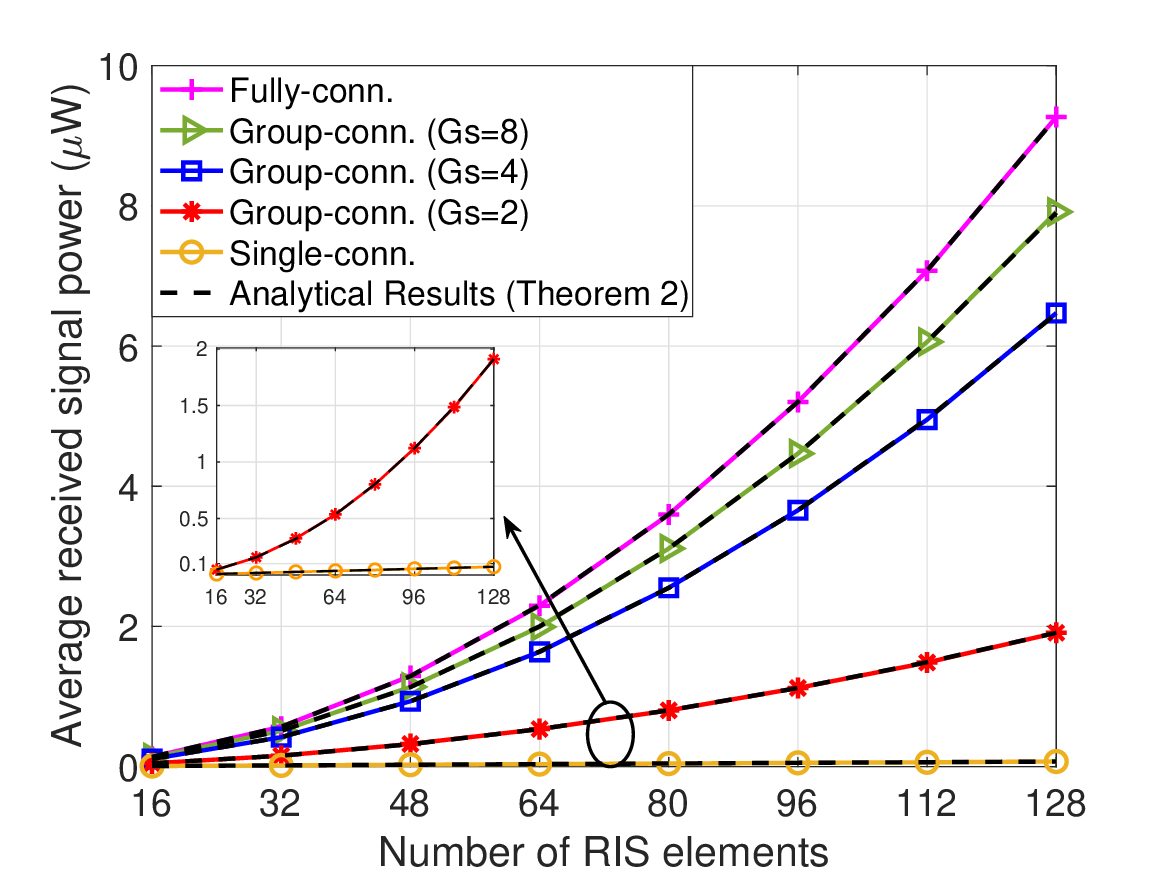}\label{operatorb}}
\subfigure[Rician fading channels.]{\includegraphics[width=0.3\textwidth]{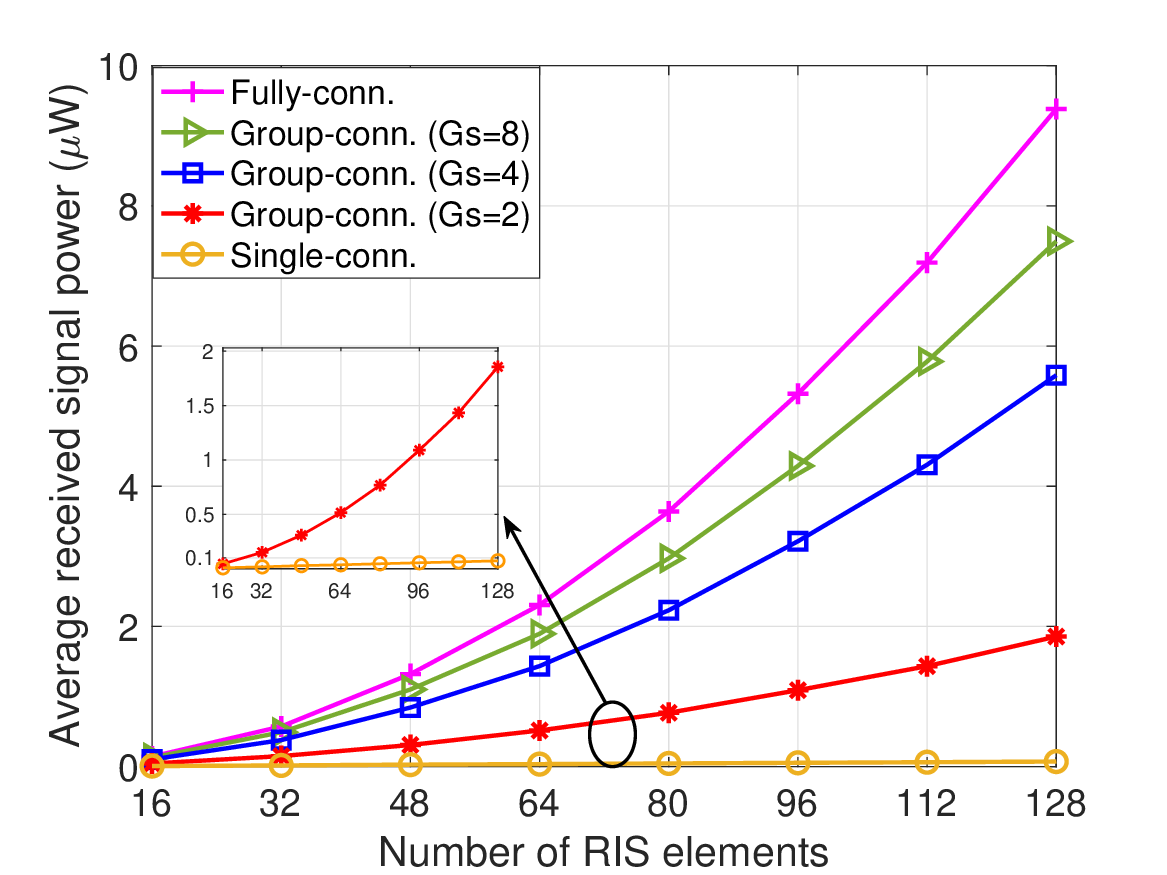}}
\centering
\caption{Average received signal power achieved by different RIS architectures versus the number of RIS elements. The number of operators is $L=2$. }
\label{fig:operator}
\end{figure*}
  \begin{figure*}
\subfigure[Rayleigh fading channels.]{\includegraphics[width=0.3\textwidth]{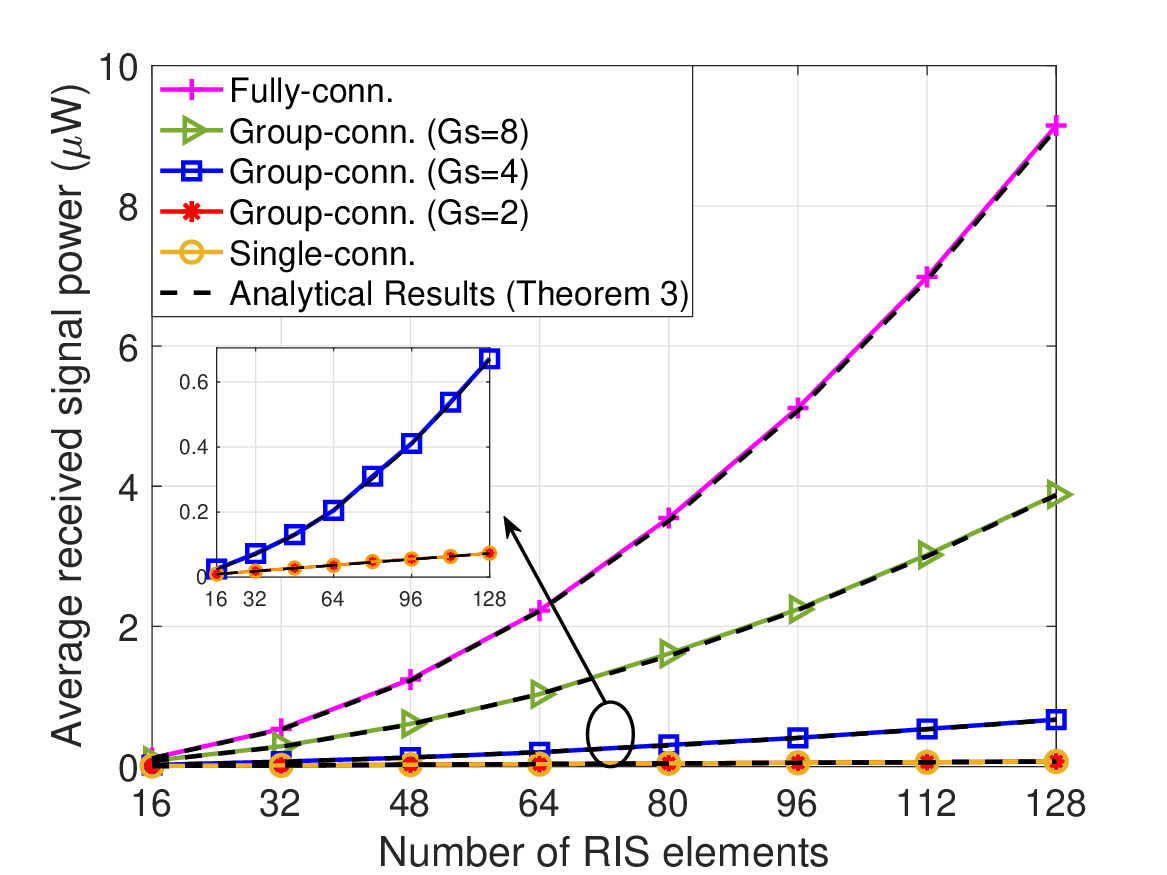}\label{operator2a}}
\subfigure[LoS channels.]{\includegraphics[width=0.3\textwidth]{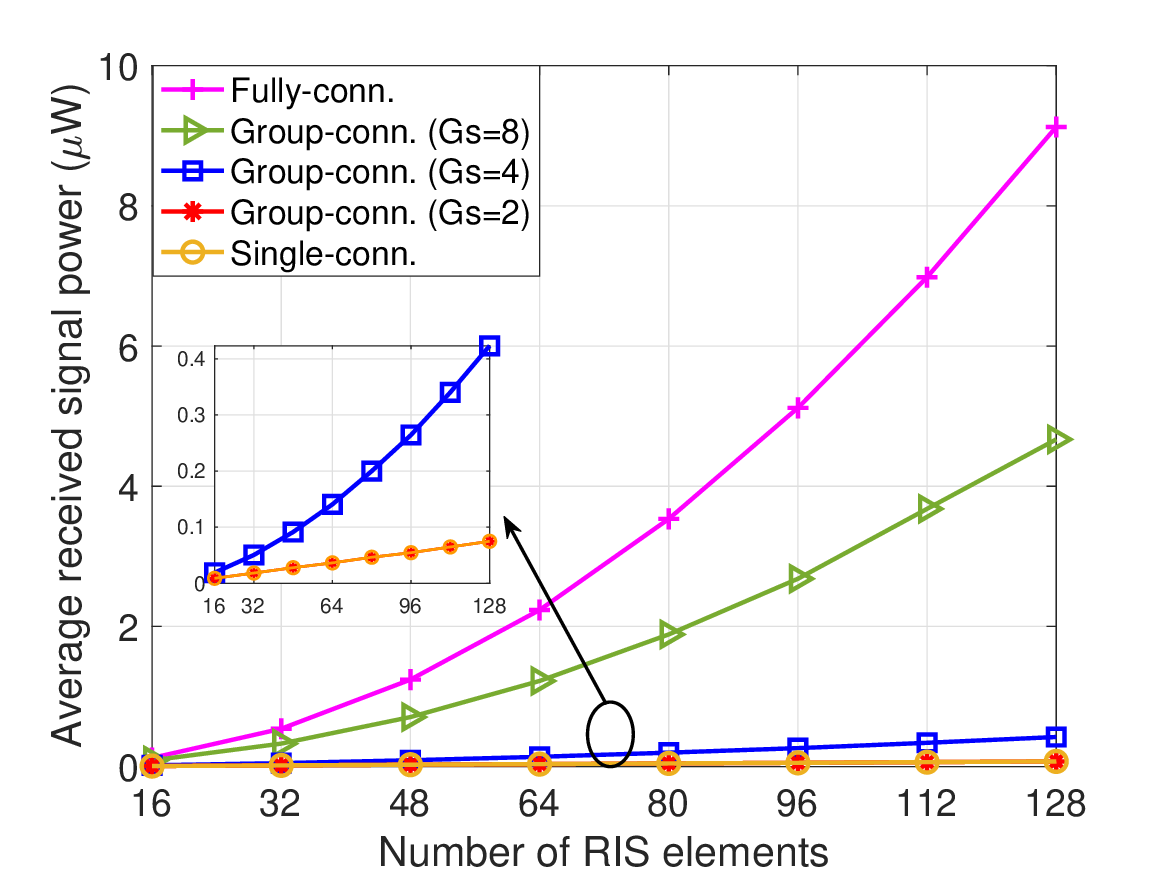}}
\subfigure[Rician fading channels.]{\includegraphics[width=0.3\textwidth]{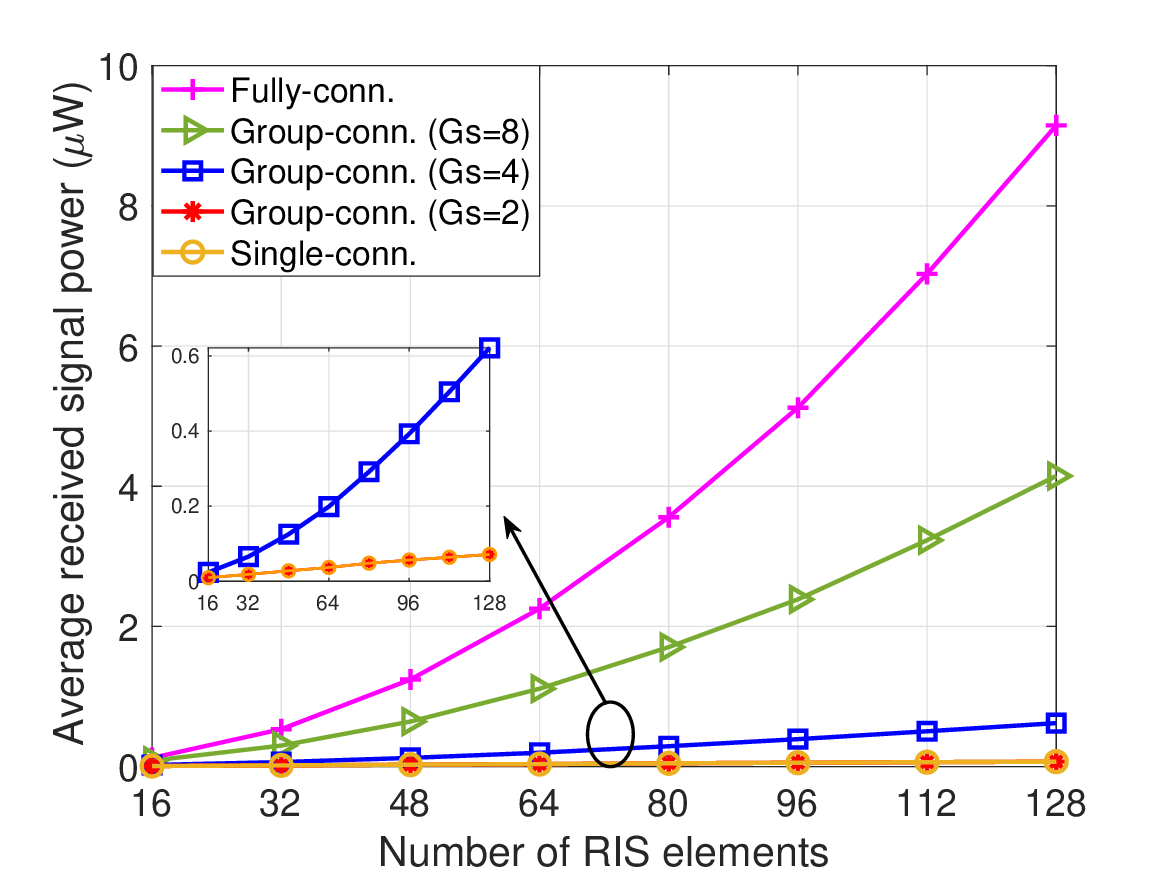}}
\centering
\caption{Average received signal power achieved by different RIS architectures versus the number of RIS elements. The number of operators is $L=4$. }
\label{fig:operator2}
\end{figure*}
Based on the scaling law analysis in Sections \ref{sec:scalinglaw} and  \ref{subsec:scalinglaw},  the key to achieving quadratic scaling is that, after satisfying the fixed RIS-reflected channel constraints imposed by the non-serving operators, the RIS still retains additional degrees of freedom to further optimize the performance of the serving operator. 
Motivated by this observation, we conjecture that a similar phenomenon also holds in multi-antenna systems, although a rigorous analysis is challenging due to the lack of a closed-form expression for the received signal power. Specifically, when $G_s \ge \tau$, the received signal power is expected to scale as $\mathcal{O}(N^2)$; otherwise, it scales as $\mathcal{O}(N)$.   The above conjecture is numerically validated  in Fig. \ref{fig:MIMO} in Section \ref{sec:simulation}. 



\section{Simulation Results}\label{sec:simulation}

In this section, we numerically evaluate the performance of different RIS architectures in multi-operator systems. Throughout this section, the distance between RIS and the BS of the serving operator, i.e., BS$_1$, is set to $d=2$m, and the distances between RIS and the BSs of the non-serving operators, i.e., BS$_2$, $\dots$, BS$_L$,  are set to $d=4$m. The distance between RIS and the user(s) served by BS$_1$ is fixed as $d=20$m. The large-scale path loss with distance $d$ is modeled as $L(d)= L_0d^{-\alpha}$. We set $L_0=-30$ dB, and adopt a path-loss exponent of $\alpha= 2.8$ for the RIS-user channel and $\alpha=2$ for the BS-RIS channels. The RIS-user channel is modeled as Rayleigh fading, while for the BS-RIS channels we consider all Rayleigh fading, LoS, and Rician fading. The Rician factor of the Rician fading channels is set to $2\,$dB. Since the BS-RIS channels vary much more slowly than the RIS-user channel, we assume that each BS-RIS channel realization remains constant for 20 RIS-user channel realizations.   The transmit power is $P_T=10$W.

In Fig. \ref{fig:operator}, we first investigate a two-operator system considered in Sections \ref{sec:solution} and \ref{sec:scalinglaw}. We depict the average received signal power for different RIS architectures of the two-operator system. The analytical results derived in Theorems \ref{scalinglaw} and \ref{scalinglaw2}, established under Rayleigh fading and LoS BS-RIS channels, are included in Figs. \ref{operatora} and \ref{operatorb}, respectively, for comparison.
  As shown in the figure, the Rayleigh, LoS, and Rician channel models exhibit the same scaling behavior. The conventional single-connected RIS performs poorly under all channel models,  with the received signal power scaling only on the order of $\mathcal{O}(N)$. In contrast, all group-connected RIS exhibit significantly better  performance.  In particular, a group-connected RIS with $G_s=2$ already achieves a receive-power scaling as $\mathcal{O}(N^2)$ in the two-operator system. As illustrated in the insets of Fig. \ref{fig:operator}, when $N=128$, the received signal power achieved by the group-connected RIS with $G_s=2$ is approximately $2\mu W$, whereas that of the single-connected RIS is around $0.1\mu W$. This corresponds to a $20$-fold (around 13 dB) gain in received signal power.  Finally, the simulated and analytical results under both Rayleigh fading and LoS channels closely match, validating the analysis in Theorems \ref{scalinglaw} and \ref{scalinglaw2}.
  
 In Fig. \ref{fig:operator2}, we further investigate the received signal power achieved by different RIS architectures in the general multi-operator system considered in Section \ref{sec:multioperator}, where the number of operators is set to $L=4$. As shown in the figure, the scaling behaviors remain unchanged across different channel models, and the analytical scaling law derived under Rayleigh fading in Theorem \ref{scalinglaw3} closely matches the simulated results. When $G_s\geq 4$, the received signal power increases with $N$ quadratically, while when $G_s<4$, it only increases linearly. 


To better visualize the impact of the number of operators on system performance, Fig. \ref{fig:wrtL} depicts the received signal power of different RIS architectures as a function of the number of operators $L$. For clarity, the theoretical curves are omitted, and only the results under Rayleigh fading BS-RIS channels are shown ( the other channel models exhibit similar trends as demonstrated in Figs. \ref{fig:operator} and \ref{fig:operator2}). As $L$ increases, all RIS architectures suffer from performance degradation, while those with larger group sizes are more robust to the increase in $L$. This validates the discussions below \eqref{wrtsingle}.   In particular, the fully-connected RIS exhibits only mild performance degradation as $L$ increases, and its received signal power remains significantly higher than that of the other group-connected RIS  when $L$ is large.  It can also be observed that a clear transition occurs when the number of operators exceeds the group size, i.e., when $L>G_s$. At the point $G_s = L-1$ highlighted in the figure, the received signal power drops to a very low level and remains almost unchanged as $L$ further increases.


  \begin{figure}
\includegraphics[scale=0.3]{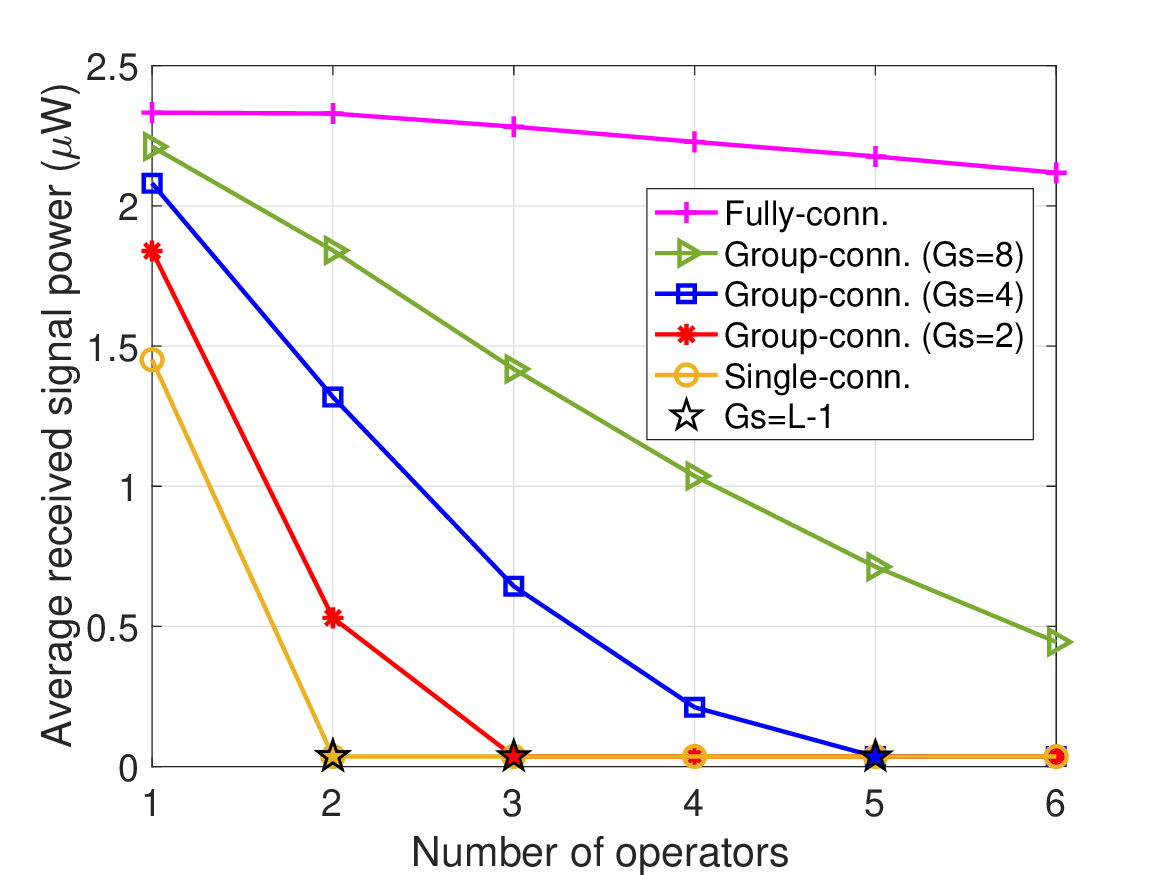}
\centering
\caption{Average received signal power of different RIS architectures versus the number of RIS elements. The number of operators is $L=4$, and the number of RIS elements is $N=64$. }
\label{fig:wrtL}
\end{figure}


In Fig. \ref{fig:MIMO}, we consider a multi-operator multi-antenna system. Specifically, the number of operators is $L=2$, and the BS of each operator is equipped with $N_T=4$ transmit antennas. The serving operator serves a single user equipped with $N_R=4$ receive antennas. Assuming single-stream transmission, we evaluate the average received signal power at the user. This corresponds to Example \ref{example1} in Section \ref{subsec:multiantenna}. We apply the algorithm in \cite{closeform} to solve the transformed problem in \eqref{problem:general2}.  As shown in the figure, the average received signal power scales quadratically with $N$ when $G_s\ge 8$, whereas it grows only linearly with $N$ when $G_s=4$ and for the single-connected RIS. This observation is consistent with our conjecture on the scaling-law behavior in Section~\ref{subsec:multiantenna}, where the theoretical transition point is given by $G_s=\tau=5$.

In Fig.~\ref{fig:multiuser}, we further consider a multi-user MISO system and evaluate its sum-rate performance, which corresponds to Example \ref{example2} in Section \ref{subsec:multiantenna}. The number of operators and the number of transmit antennas at each BS are the same as in Fig.~\ref{fig:MIMO}. The number of serving users is $N_R=4$.  The algorithm in \cite{group_conn} is applied to solve the  transformed problem in \eqref{problem:general2}. It can be observed that the sum  rate improves substantially as the group size increases from $G_s=4$ to $G_s=8$. This is because when $G_s$ exceeds the threshold $G_s=\tau=5$, the received signal power transitions from linear scaling to quadratic scaling, which leads to a significant sum-rate improvement. 
  \begin{figure}
\includegraphics[scale=0.3]{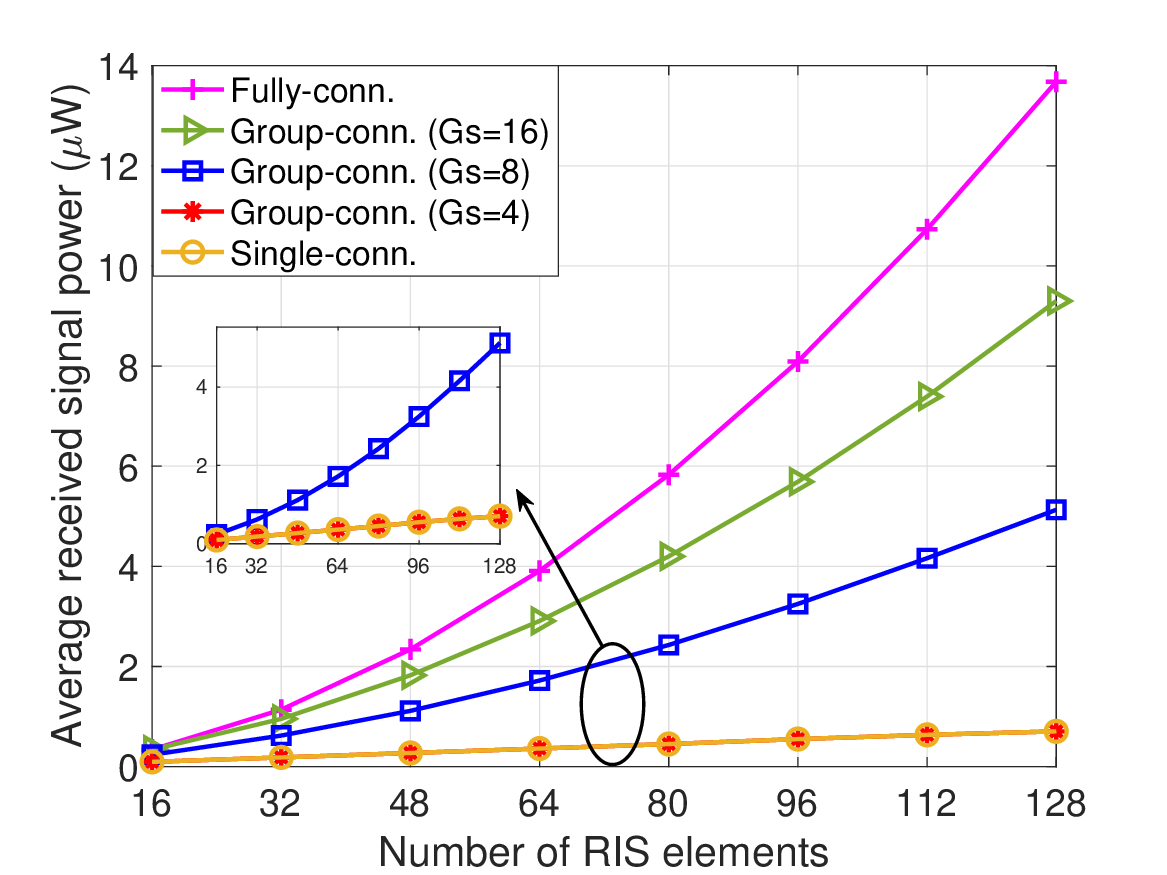}
\centering
\caption{Average received signal power of different RIS architectures versus the number of RIS elements for a single-stream MIMO system. The number of operators is $L=2$. Each BS is equipped with $N_T=4$ antennas. The number of receive antennas is $N_R=4$.}
\label{fig:MIMO}
\end{figure}
\vspace{-0.5cm}
  \begin{figure}
\includegraphics[scale=0.3]{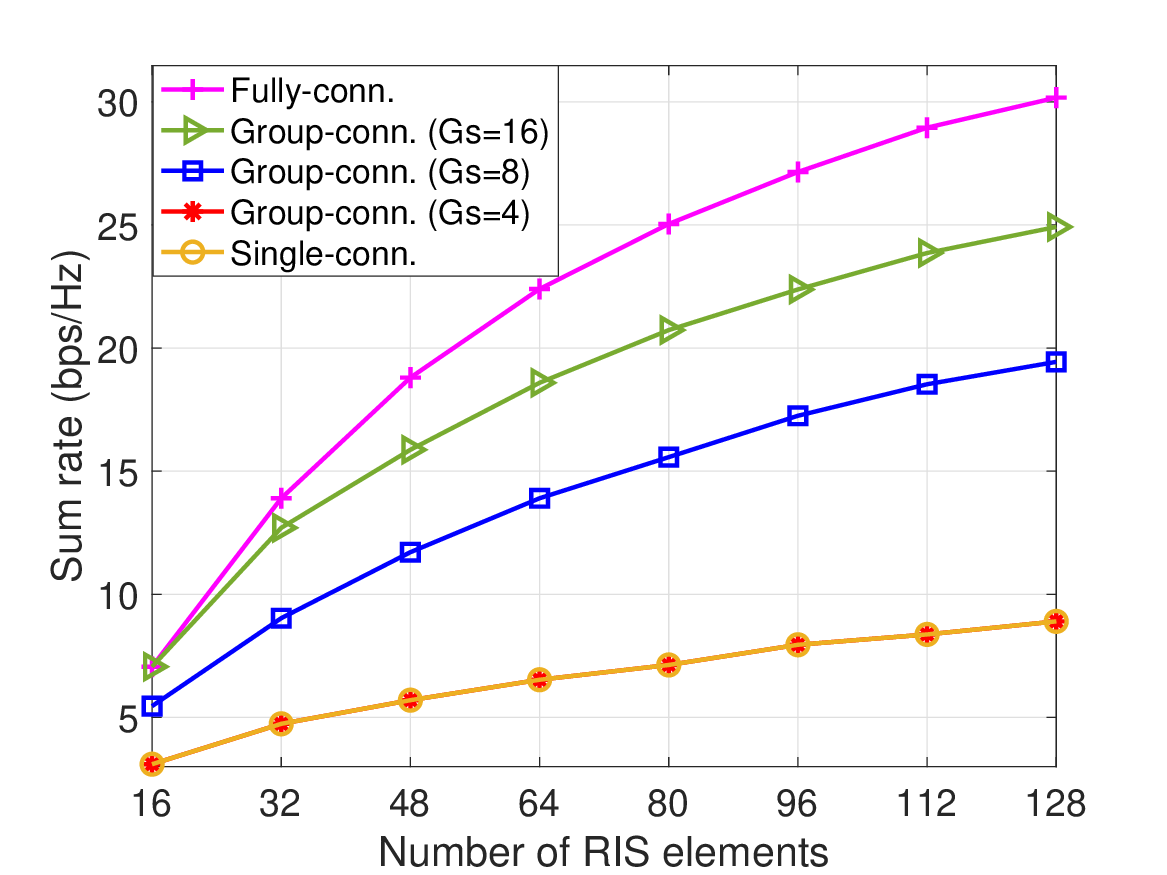}
\centering
\caption{Sum rate of different RIS architectures versus the number of RIS elements for a multi-user MISO system. The number of operators is $L=2$.  Each BS is equipped with $N_T=4$ antennas. The number of users is $N_R=4$. The noise power is set to $\sigma^2=10^{-8}$. }\label{fig:multiuser}
\end{figure}

 \section{Conclusion}\label{sec:conclusion}

This paper investigates RIS optimization and performance analysis in a multi-operator system, where one operator serves its user with the aid of RIS, while multiple non-serving operators coexist in the same environment. We formulate the problem as maximizing the received signal power of the serving operator while keeping the RIS-reflected channels of the non-serving operators fixed. To solve this problem, we develop a general framework for handling the resulting coupled linear and unitary constraints, based on which closed-form optimal solutions are derived and the corresponding received signal power  is characterized. 

Our analysis leads to the following conclusions. For the two-operator case, where each BS is equipped with a single antenna, we show that group-connected RIS with group size $G_s\geq 2$  achieves $\mathcal{O}(N^2)$ scaling in received signal power with the number of RIS elements $N$, whereas conventional single-connected RIS achieves only $\mathcal{O}(N)$ scaling. More generally, for an $L$-operator system, the scaling law of the received signal power exhibits a phase transition at the RIS group size $G_s=L$: when $G_s \geq L$, the received signal power scales as $\mathcal{O}(N^2)$, whereas when $G_s < L$,  it scales only as $\mathcal{O}(N)$. For the more general multi-antenna setting, numerical results indicate a similar phase transition at $
G_s = N_T(L-1)+1,$
where $N_T$ denotes the number of transmit antennas at each BS.

 
\appendices

\section{Proof of Theorem \ref{scalinglaw}}\label{proofscalinglaw}
Result (i) in Theorems \ref{scalinglaw}  is a special case of Result (ii) with $G=1$ and $G_s=N$. We next present the proof of Results (ii) and (iii) in Appendices \ref{proofscalinglaw:a} and \ref{proofscalinglaw:b}, respectively. 
\vspace{-0.3cm}
\subsection{Proof of Theorem \ref{scalinglaw} (ii)}\label{proofscalinglaw:a}
We first present the following lemma, which will be used frequently in the proof of Theorem \ref{scalinglaw} (ii). {\color{black} Its proof is standard and is provided in Appendix \ref{app:lemma1}. }
\begin{lemma}\label{lemma1}
Let $\x \sim \mathcal{CN}(\mathbf{0},\mathbf{I}_n)$, and let $\y \in \mathbb{C}^n$ be a unit-norm vector, i.e., $\|\y\|=1$, which is either a random vector independent of $\x$ or a deterministic vector. Then the following hold:
\begin{itemize}
    \item[(i)] $\y^H \x \sim \mathcal{CN}(0,1)$ and is independent of $\y$.
    \item[(ii)]  $\|\x\|^2-\left|{\y^H\x}\right|^2 \sim \frac{1}{2}\chi^2_{2(n-1)}$  and is independent of $\y$.
    \item[(iii)] The random variables $\y^H\x$ and $\|\x\|^2-|\y^H\x|^2$ are independent.
\end{itemize}
\end{lemma}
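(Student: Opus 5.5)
The plan rests on rotational invariance of the standard complex Gaussian. Throughout, $\y$ is either deterministic or random and independent of $\x$. In both cases we condition on $\y$ and work with its unitary completion $\bU(\y)$ from Definition~\ref{definition:UC_vector}, whose first column is $\y$ since $\|\y\|=1$.

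\textbf{Step 1: rotate.} Define $\mathbf{z}:=\bU(\y)^H\x$. Conditioned on $\y$, the vector $\x$ is still $\mathcal{CN}(\mathbf{0},\mathbf{I}_n)$ by independence. Because $\bU(\y)$ is unitary and fixed under the conditioning, $\mathbf{z}\mid\y\sim\mathcal{CN}(\mathbf{0},\bU(\y)^H\bU(\y))=\mathcal{CN}(\mathbf{0},\mathbf{I}_n)$. This conditional law does not depend on $\y$, so $\mathbf{z}$ is unconditionally $\mathcal{CN}(\mathbf{0},\mathbf{I}_n)$ and independent of $\y$.

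\textbf{Step 2: read off the two quantities.} Since $\y$ is the first column of $\bU(\y)$, we have $[\mathbf{z}]_1=\y^H\x$. Since $\bU(\y)$ is unitary, $\|\mathbf{z}\|=\|\x\|$, which gives
\[
\|\x\|^2-|\y^H\x|^2=\|[\mathbf{z}]_{2:n}\|^2=\sum_{k=2}^n |[\mathbf{z}]_k|^2 .
\]
The entries of $\mathbf{z}$ are i.i.d.\ $\mathcal{CN}(0,1)$ and $\mathbf{z}$ is independent of $\y$. Hence $[\mathbf{z}]_1\sim\mathcal{CN}(0,1)$, which is (i).

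For (ii), each $|[\mathbf{z}]_k|^2$ is a sum of squares of two independent real $\mathcal{N}(0,\tfrac12)$ variables, so $|[\mathbf{z}]_k|^2\sim\tfrac12\chi^2_2$. Summing $n-1$ independent such terms gives $\tfrac12\chi^2_{2(n-1)}$, and this quantity is independent of $\y$ because it is a function of $\mathbf{z}$ alone.

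For (iii), $[\mathbf{z}]_1$ and $[\mathbf{z}]_{2:n}$ are independent blocks of a white Gaussian vector, so $\y^H\x$ and $\|\x\|^2-|\y^H\x|^2$ are independent.

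\textbf{Step 3: measurability of $\bU(\y)$.} The only subtle point is that $\bU(\y)$ must be a measurable function of $\y$, so that the conditioning argument is legitimate. This is the main technical point I expect. I would handle it by fixing a measurable construction, for example Gram--Schmidt applied to $\y$ followed by the first standard basis vectors not in its span. Alternatively, I would avoid the construction altogether by computing conditional characteristic functions directly. In that route, $\y^H\x\mid\y\sim\mathcal{CN}(0,\|\y\|^2)$. The pair $\bigl(\y^H\x,\ (\mathbf{I}-\y\y^H)\x\bigr)$ is jointly Gaussian given $\y$ with zero cross-covariance $\y^H(\mathbf{I}-\y\y^H)=\mathbf{0}$, hence the two components are conditionally independent. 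Finally, $\|(\mathbf{I}-\y\y^H)\x\|^2$ is a Gaussian quadratic form with a rank-$(n-1)$ projection, so its law is $\tfrac12\chi^2_{2(n-1)}$ regardless of $\y$.
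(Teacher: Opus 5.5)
Your proposal is correct and follows essentially the same route as the paper's proof in Appendix~\ref{app:lemma1}: you condition on $\y$, rotate by the unitary completion $\bU(\y)=[\y,\ \bY]$ so that $\bU(\y)^H\x\mid\y\sim\mathcal{CN}(\mathbf{0},\mathbf{I}_n)$ with a conditional law free of $\y$, and then read off $\y^H\x$ as the first coordinate and $\|\x\|^2-|\y^H\x|^2$ as the squared norm of the remaining $n-1$ coordinates. Your measurability remark and the projection-based alternative add care the paper omits, since it simply takes $\bU(\y)$ as given, but they do not change the argument.
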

{

We now prove Result (ii) in Theorems \ref{scalinglaw}.
For notational simplicity, define
\begin{equation}\label{def:alpha_beta}
\begin{aligned}
\alpha_{1,g}&=\frac{\bd_g^H\h_{RI,g}}{\|\bd_g\|},~~~~\beta_{1,g}=\|\h_{RI,g}\|^2-\bigg|\frac{\bd_g^H\h_{RI,g}}{\|\bd_g\|}\bigg|^2,\\
\alpha_{2,g}&=\frac{\h_{IT_2,g}^H\h_{IT_1,g}}{\|\h_{IT_2,g}\|},~{\beta}_{2,g}=\|\h_{IT_1,g}\|^2-\bigg|\frac{\h_{IT_2,g}^H\h_{IT_1,g}}{\|\h_{IT_2,g}\|}\bigg|^2.
\end{aligned}
\end{equation}
Our goal is to  compute the expectation of the received  signal power in \eqref{group:optvalue}, which, with the above notation, can be expressed as 
\begin{equation}\label{Ef}
\begin{aligned}
\hspace{-0.1cm}\mathbb{E}[P_R^\star]=&\mathbb{E}\left[\left(\bigg|\sum_{g=1}^G\alpha_{1,g}^*\alpha_{2,g}\bigg|+\sum_{g=1}^G\beta_{1,g}^{\frac{1}{2}}\beta_{2,g}^{\frac{1}{2}}\right)^2\right]\\
=&\mathbb{E}\left[\bigg|\sum_{g=1}^G\alpha_{1,g}^*\alpha_{2,g}\bigg|^2\right]+\mathbb{E}\left[\left(\sum_{g=1}^G\beta_{1,g}^{\frac{1}{2}}\beta_{2,g}^{\frac{1}{2}}\right)^2\right]\\
&+2\mathbb{E}\left[\bigg|\sum_{g=1}^G\alpha_{1,g}^*\alpha_{2,g}\bigg|\sum_{g=1}^G\beta_{1,g}^{\frac{1}{2}}\beta_{2,g}^{\frac{1}{2}}\right]\\
:=&T_1+T_2+T_3.
\end{aligned}
\end{equation}
We next derive $T_1, T_2,$ and $T_3$ for Rayleigh fading and LoS BS-RIS channels, separately.  
According to Lemma \ref{lemma1} and the assumptions in Theorem \ref{scalinglaw},  
\begin{subequations}
\begin{align}
&\alpha_{1,g}\sim\mathcal{CN}(0,\rho_{RI}),~~{\beta}_{1,g}\sim\frac{\rho_{RI}}{2}\chi^2_{2(G_s-1)},\label{alphabeta1}\\
&\alpha_{2,g}\sim\mathcal{CN}(0,\rho_{IT_1}),~{\beta}_{2,g}\sim\frac{\rho_{IT_1}}{2}\chi^2_{2(G_s-1)}\label{alphabeta2},
\end{align}
\end{subequations}
 all of which are independent. Next, we analyze the three terms in \eqref{Ef}. First,  \vspace{-0.2cm}
\begin{equation*}
\begin{aligned}
T_1=&\sum_{g=1}^G\mathbb{E}\left[|\alpha_{1,g}|^2|\alpha_{2,g}|^2\right]\hspace{-0.05cm}+\hspace{-0.15cm}\sum_{g_1\neq g_2}\mathbb{E}[\alpha_{1,g_1}\alpha_{2,g_1}^*\alpha_{1,g_2}^*\alpha_{2,g_2}]\\
\overset{(a)}{=}&\sum_{g=1}^G\mathbb{E}\left[|\alpha_{1,g}|^2\right]\mathbb{E}\left[|\alpha_{2,g}|^2\right]=G\rho_{RI}\rho_{IT_1},
\end{aligned}
\end{equation*}
where $(a)$ holds since $\alpha_{1,g_1},\alpha_{2,g_1},\alpha_{1,g_2},\alpha_{2,g_2}$ are independent with zero mean. The second term satisfies 
\begin{equation*}
\begin{aligned}
T_2\hspace{-0.05cm}&
\overset{(a)}{=}\hspace{-0.1cm}\sum_{g=1}^G\mathbb{E}[\beta_{1,g}]\mathbb{E}[\beta_{2,g}]+\hspace{-0.2cm}\sum_{g_1\neq g_2}\hspace{-0.15cm}\mathbb{E}[\beta_{1,g_1}^{\frac{1}{2}}]\mathbb{E}[\beta_{2,g_1}^{\frac{1}{2}}]\mathbb{E}[\beta_{1,g_2}^{\frac{1}{2}}]\mathbb{E}[\beta_{2,g_2}^{\frac{1}{2}}]\\
&\overset{(b)}{=}\rho_{RI}\rho_{IT_1}\left(G(G_s-1)^2+G(G-1)\left(\frac{\Gamma(G_s-\frac{1}{2})}{\Gamma(G_s-1)}\right)^4\right),
\end{aligned}
\end{equation*}
where $(a)$ follows from the independence of the random variables, $(b)$ uses the fact that for a random variable  $X\sim\chi^2_n$, 
\begin{equation}\label{chiproperty}
\mathbb{E}[X]=n \text{ and }\mathbb{E}[X^{\frac{1}{2}}]=\sqrt{2}\frac{\Gamma(\frac{n+1}{2})}{\Gamma(\frac{n}{2})}.
\end{equation} Finally, \vspace{-0.2cm}
\begin{equation}\label{eq:T3}
\begin{aligned}
\hspace{-0.2cm}T_3&\hspace{-0.05cm}=2\mathbb{E}\left[\bigg|\sum_{g=1}^G\alpha_{1,g}^*\alpha_{2,g}\bigg|\right]\left(\sum_{g=1}^G\mathbb{E}\left[\beta_{1,g}^{\frac{1}{2}}\beta_{2,g}^{\frac{1}{2}}\right]\right)\\
&\hspace{-0.05cm}=2\hspace{-0.05cm}\sqrt{\rho_{RI}\rho_{IT_1}}G\left(\frac{\Gamma(G_s-\frac{1}{2})}{\Gamma(G_s-1)}\right)^2\hspace{-0.1cm}\mathbb{E}\hspace{-0.05cm}\left[\bigg|\sum_{g=1}^G\alpha_{1,g}^*\alpha_{2,g}\bigg|\right].
\end{aligned}
\end{equation}
Let $\boldsymbol{\alpha}_1=[\alpha_{1,1},\alpha_{1,2},\dots,\alpha_{1,G}]^T\sim\mathcal{CN}(\mathbf{0},\rho_{RI}\mathbf{I}_G)$ and $\boldsymbol{\alpha}_2=[\alpha_{2,1},\alpha_{2,2},\dots,\alpha_{2,G}]^T\sim\mathcal{CN}(\mathbf{0},\rho_{IT_1}\mathbf{I}_G)$, which are independent. Then the expectation involved in $T_3$ satisfies 
\begin{equation}\label{def:alpha_1alpha_2}
\begin{aligned}
\mathbb{E}\left[\bigg|\sum_{g=1}^G\alpha_{1,g}^*\alpha_{2,g}\bigg|\right]&=\mathbb{E}[|\boldsymbol{\alpha}_1^H\boldsymbol{\alpha}_2|]\\&\overset{(a)}{=}\mathbb{E}_{\boldsymbol{\alpha}_1}\mathbb{E}_{\boldsymbol{\alpha}_2}[|\boldsymbol{\alpha}_1^H\boldsymbol{\alpha}_2|\mid \boldsymbol{\alpha}_1]\\
&\overset{(b)}{=}\frac{\sqrt{\pi}}{2}\sqrt{\rho_{IT_1}}\mathbb{E}_{\boldsymbol{\alpha}_1}[\|\boldsymbol{\alpha}_1\|]\\&
\overset{(c)}{=}\frac{\sqrt{\pi}}{2}\sqrt{\rho_{RI}\rho_{IT_1}}\frac{\Gamma(G+\frac{1}{2})}{\Gamma(G)},
\end{aligned}
\end{equation}
where $(a)$ applies the law of total expectation, $(b)$ holds since $\boldsymbol{\alpha}_1^H\boldsymbol{\alpha}_2\sim\mathcal{CN}(\mathbf{0},\rho_{IT_1}\|\boldsymbol{\alpha}_1\|^2)$ conditioned on $\boldsymbol{\alpha}_1$ and $\mathbb{E}[|X|]=\frac{\sqrt{\pi\rho}}{2}$ for $X\sim\mathcal{CN}(0,\rho)$, and $(c)$ follows from the fact that $\|\boldsymbol{\alpha}_1\|^2\sim\frac{\rho_{RI}}{2}\chi^2_{2G}$ and \eqref{chiproperty}. Substituting \eqref{def:alpha_1alpha_2} into \eqref{eq:T3} and combining the resulting expression with $T_1$ and $T_2$ yields the desired result in Theorem \ref{scalinglaw} (ii).

\vspace{-0.2cm}
\subsection{Proof of Theorem \ref{scalinglaw} (iii)}\label{proofscalinglaw:b}
 Note that with $\bthe$ given by \eqref{theta:single}, $\bar{\h}_{RI}:=\bthe\h_{RI}\sim\mathcal{CN}(\mathbf{0},\rho_{RI}\mathbf{I}_N)$ and is independent of $\h_{IT_1}$, since $\h_{RI}$ is independent of $\h_{IT_1}$ and $\h_{IT_2}$. Under the assumption in Theorem \ref{scalinglaw}, 
\begin{equation}\label{proof:single}
\begin{aligned}
\mathbb{E}[|\bar{\h}_{RI}^H\h_{IT_1}|^2]&\overset{(a)}{=}\mathbb{E}_{\h_{IT_1}}\left[\mathbb{E}_{\h_{RI}}[\h_{IT_1}^H\bar{\h}_{RI}\bar{\h}_{RI}^H\h_{IT_1}\mid \h_{IT_1}]\right]\\&=\rho_{RI}\mathbb{E}_{\h_{IT_1}}[\|\h_{IT_1}\|^2]=N\rho_{RI}\rho_{IT_1},
\end{aligned}
\end{equation}
where $(a)$ applies the total law of expectation.  This proves Theorem \ref{scalinglaw} (iii).

\section{Proof of Lemma 1}\label{app:lemma1}
We focus on the case where $\y$ is random and independent of $\x$,  the deterministic case follows as a special case.

Since $\|\y\|=1$, its unitary completion can be written as 
$\bU(\y)=\begin{bmatrix}\y & \Y\end{bmatrix},$
where $\Y^H\Y=I_{n-1}$.  Since $\x\sim\mathcal{CN}(\mathbf{0},\mathbf{I}_n)$ and $\bU(\y)$ is unitary, conditioned on $\y$ we have
\[
\bU(\y)^H\x=
\begin{bmatrix}
\y^H\x\\
\Y^H\x
\end{bmatrix}
\sim \mathcal{CN}(\mathbf{0},\mathbf{I}_n).
\]
Hence,
$
\y^H\x\mid \y \sim \mathcal{CN}(0,1),~
\Y^H\x\mid \y \sim \mathcal{CN}(\mathbf{0},\mathbf{I}_{n-1}),
$
and these two random variables are conditionally independent. Since the above conditional distributions do not depend on $\y$, it follows that $
\y^H\x\sim\mathcal{CN}(0,1)$ and $\Y^H\x\sim \mathcal{CN}(\mathbf{0},\mathbf{I}_{n-1})$, and both are independent of $\y$, which proves (i). In addition, $
\y^H\x$ and $\bY^H\x$ are unconditionally independent. 

By the unitarity of $\bU(\y)$, the random variable in (ii) can be expressed as 
\[
\|\bU(\bY)^H\x\|^2-|\y^H\x|^2=\|\Y^H\x\|^2.
\]
Since $\Y^H\x\sim\mathcal{CN}(\mathbf{0},\mathbf{I}_{n-1})$ and is independent of $\by$, we have
$
\|\bY^H\x\|^2\sim
\frac{1}{2}\chi^2_{2(n-1)}
$
and is independent of $\y$, which proves (ii).
  Finally, (iii) follows immediately from the independence of  $\y^H\x$ and $\Y^H\x$. 
\section{Proof of Theorem \ref{scalinglaw2}}
This appendix gives the proof of Theorem \ref{scalinglaw2}. The proof follows the same procedure as that of Theorem \ref{scalinglaw}. In Appendix \ref{Theorem2ii}, we prove Theorem \ref{scalinglaw2} (ii), which includes Theorem \ref{scalinglaw2} (i) as a special case. The proof of Theorem \ref{scalinglaw2} (iii) is given in Appendix \ref{Theorem2iii}. 

\subsection{Proof of Theorem \ref{scalinglaw2} \normalfont{(ii)}} \label{Theorem2ii}
Under the assumption in Theorem \ref{scalinglaw2}, the distributions of $\alpha_{1,g}$ and $\beta_{1,g}$ are the same as those given by \eqref{alphabeta1}. In contrast, $\alpha_{2,g}$ and $\beta_{2,g}$ are deterministic and are functions of $\theta_{IT_l},~l\in\{1,2\}$. We next compute these constants.

According to the notation in Theorem \ref{scalinglaw2}, $$\h_{IT_l,g}=\sqrt{\rho_{IT_l}}[e^{-\mathrm{i}(g-1)G_s\mu_l},\dots,e^{-\mathrm{i}(gG_s-1)\mu_l}]^T,$$ and $\alpha_{2,g}$ defined in (30) can be expressed as 
$$
\begin{aligned}
\alpha_{2,g}&\overset{(a)}{=}\sqrt{\rho_{IT_1}}\frac{e^{\mathrm{i}(g-1)G_s\Delta\mu}}{\sqrt{G_s}}\sum_{n=0}^{G_s-1}e^{\mathrm{i}n\Delta\mu}\\
&\overset{(b)}{=}\sqrt{\rho_{IT_1}}\frac{e^{\mathrm{i}(g-1)G_s\Delta\mu}}{\sqrt{G_s}}e^{\mathrm{i}\frac{(G_s-1)\Delta\mu}{2}}\frac{\sin\frac{G_s\Delta\mu}{2}}{\sin\frac{\Delta\mu}{2}},
\end{aligned}$$
where $(a)$ holds since $\|\h_{IT_2,g}\|=\sqrt{\rho_{IT_2}G_s}$ and $(b)$ follows from the Dirichlet kernel identity \cite{stein2003fourier}. It follows that $|\alpha_{2,g}|^2=\rho_{IT_1}L_{G_s,\Delta\mu},$ where  the definition of $L_{G_s,\Delta\mu}$ is given in \eqref{def:L_G}.
Applying $\|\h_{IT_1,g}\|=\sqrt{\rho_{IT_1}G_s}$,  $\beta_{2,g}$ defined in \eqref{def:alpha_beta} can be written as 
$$\beta_{2,g}=\rho_{IT_1}(G_s-L_{G_s,\Delta\mu}).$$

Now we are ready to compute $T_1,T_2$ and $T_3$ in \eqref{Ef}. For $T_1$, since $\alpha_{1,g_1}$ and $\alpha_{1,g_2}$ are independent for all $g_1\neq g_2$, we have
$$
\begin{aligned}
T_1=\sum_{g=1}^G|\alpha_{2,g}|^2\mathbb{E}\left[|\alpha_{1,g}|^2\right]=GL_{G_s,\Delta\mu}\rho_{RI}\rho_{IT_1}.
\end{aligned}
$$
For $T_2$, the following result holds:
$$
\begin{aligned}
T_2\hspace{-0.05cm}&
=\hspace{-0.05cm}\sum_{g=1}^G\mathbb{E}[\beta_{1,g}]\beta_{2,g}+\hspace{-0.2cm}\sum_{g_1\neq g_2}\hspace{-0.1cm}\mathbb{E}[\beta_{1,g_1}^{\frac{1}{2}}]\mathbb{E}[\beta_{1,g_2}^{\frac{1}{2}}]\beta_{2,g_1}^{\frac{1}{2}}\beta_{2,g_2}^{\frac{1}{2}}\\
&=\rho_{RI}\rho_{IT_1}\bigg(G(G_s-1)\left(G_s-L_{G_s,\Delta\mu}\right)\\
&+G(G-1)\left(\frac{\Gamma(G_s-\frac{1}{2})}{\Gamma(G_s-1)}\right)^2(G_s-L_{G_s,\Delta\mu})\bigg).
\end{aligned}$$
Finally, following the same procedures and notations as in \eqref{eq:T3} and \eqref{def:alpha_1alpha_2}, $T_3$ satisfies
\begin{equation*}
\begin{aligned}
T_3&\hspace{-0.05cm}=2\mathbb{E}\left[\bigg|\sum_{g=1}^G\alpha_{1,g}^*\alpha_{2,g}\bigg|\right]\left(\sum_{g=1}^G\mathbb{E}\left[\beta_{1,g}^{\frac{1}{2}}\right]\beta_{2,g}^{\frac{1}{2}}\right)\\
&\hspace{-0.05cm}=2\hspace{-0.05cm}\sqrt{\rho_{RI}\rho_{IT_1}}G\frac{\Gamma(G_s-\frac{1}{2})}{\Gamma(G_s-1)}\left(G_s-L_{G_s,\Delta\mu}\right)^{\frac{1}{2}} \mathbb{E}\left[|\boldsymbol{\alpha}_1^H\boldsymbol{\alpha}_2|\right],
\end{aligned}
\end{equation*}
where $\boldsymbol{\alpha}_1^H\boldsymbol{\alpha}_2\sim\mathcal{CN}(0,\rho_{RI}\|\boldsymbol{\alpha}_2\|^2)$, and thus  
$$\begin{aligned}
\mathbb{E}\left[|\boldsymbol{\alpha}_1^H\boldsymbol{\alpha}_2|\right]&=\frac{\sqrt{\pi\rho_{RI}}}{2}\|\boldsymbol{\alpha}_2\|=\frac{\sqrt{\pi\rho_{RI}\rho_{IT_1} G}}{2}L_{G_s,\Delta\mu}.
\end{aligned}
$$
Combining the above yields Theorem \ref{scalinglaw2} (ii). 
\subsection{Proof of Theorem \ref{scalinglaw2} \normalfont{(iii)}} \label{Theorem2iii}
Under the assumption in Theorem \ref{scalinglaw2}, $\bar{\h}_{RI}^H\h_{IT_1}\sim\mathcal{CN}(0,N\rho_{RI}\rho_{IT_1})$ due to the fact that $\|\h_{IT_1}\|^2=N\rho_{IT_1}$. Hence,
$$
\begin{aligned}
\mathbb{E}[|\bar{\h}_{RI}^H\h_{IT_1}|^2]=N\rho_{RI}\rho_{IT_1}.
\end{aligned}$$
This proves Theorem \ref{scalinglaw2} (iii).
\section{Proof of Proposition \ref{pro4}}
First, given a matrix $\bthe_g$ satisfying constraints \eqref{con03} and \eqref{constraint:linear_multioperator}, define 
\begin{equation}\label{def:Q_2}
\mathbf{Q}_g:=\bU(\mathbf{D}_g)^H\bthe_g\bU(\bH_g), \tag{37}
\end{equation}
where $\bU(\mathbf{D}_g)$ and $\bU(\mathbf{H}_g)$ are any unitary completion of $\bD_g$ and $\bH_g$, respectively. Clearly, $\bQ_g$ is unitary  since $\bU(\mathbf{D}_g)$, $\bthe_g$, and $\bU(\bH_g)$ are all unitary. In addition, the first $L-1$ columns of $\mathbf{Q}_g$ satisfy 
$$
\begin{aligned}
\mathbf{Q}_g\mathbf{E}_{L-1}&=\bU(\mathbf{D}_g)^H\bthe_g\bU(\bH_g)\mathbf{E}_{L-1}\\
&\overset{(a)}{=}\bU(\mathbf{D}_g)^H\bthe_g\bH_g(\bH_g^H\bH_g)^{-\frac{1}{2}}\\
&\overset{(b)}{=}\bU(\mathbf{D}_g)^H\mathbf{D}_g(\bH_g^H\bH_g)^{-\frac{1}{2}}\\
&\overset{(c)}{=}\mathbf{E}_{L-1}(\bD_g^H\bD_g)^{\frac{1}{2}}(\bH_g^H\bH_g)^{-\frac{1}{2}}\\
&\overset{(d)}{=}\mathbf{E}_{L-1},
\end{aligned}
$$
where (a) follows from the definition of the unitary completion in Definition \ref{def:UC_matrix}, i.e., $\bU(\bH_g)\mathbf{E}_{L-1}= \bH_g(\bH_g^H\bH_g)^{-\frac{1}{2}}$,  (b) applies \eqref{constraint:linear_multioperator},  (c) holds since $\bU(\mathbf{D}_g)^H\mathbf{D}_g=\mathbf{E}_{L-1}(\bD_g^H\bD_g)^{\frac{1}{2}}$, which is obtained by multiplying both sides of $\bU(\mathbf{D}_g)\mathbf{E}_{L-1}=\mathbf{D}_g(\bD_g^H\bD_g)^{-\frac{1}{2}}$ by $\bU(\bD_g)^H$, and (d) uses  $\bH_g^H\bH_g=\bD_g^H\bD_g$. Therefore, $\bQ_g$ can be expressed as $$\bQ_g=\left[\begin{matrix}\mathbf{I}_{L-1}&{\bB_g}\\\mathbf{0}&\bar{\bQ}_g\end{matrix}\right].$$ Using $\bQ_g\bQ_g^H=\mathbf{I}_{G_s}$, we get  $\bar{\bQ}_g\bar{\bQ}_g^H=\mathbf{I}_{G_s-L+1}$ and 
$$\mathbf{I}_{L-1}+\bB_g^H\bB_g=\mathbf{I}_{L-1},$$ which further implies that  ${\bB}_g=\mathbf{0}$. This, combined with \eqref{def:Q_2}, indicates that  $\bthe_g$ can be written as in Proposition \ref{pro4} with $\bar{\bthe}_g=\bar{\bQ}_g$.

On the other hand, let ${\bthe}_g$ be a matrix in the form of the decomposition in Proposition \ref{pro4}. It is easy to check that $\bthe_g$ is unitary, i.e., $\bthe_g$ satisfies \eqref{con03}. In addition,  
$$
\begin{aligned}
\bthe_g\bH_g&=\bU(\mathbf{D}_g)\left[\begin{matrix}\mathbf{I}_{L-1}&\mathbf{0}\\\mathbf{0}&\bar{\bthe}_g\end{matrix}\right]\bU(\bH_g)^H\bH_g\\
&\overset{(a)}{=}\bU(\bD_g)\left[\begin{matrix}\mathbf{I}_{L-1}&\mathbf{0}\\\mathbf{0}&\bar{\bthe}_g\end{matrix}\right]\mathbf{E}_{L-1}(\bH_g^H\bH_g)^{\frac{1}{2}}\\
&=\bU(\mathbf{D}_g)\mathbf{E}_{L-1}(\bH_g^H\bH_g)^{\frac{1}{2}}\\
&\overset{(b)}{=}\bD_g(\mathbf{D}_g^H\mathbf{D}_g)^{-\frac{1}{2}}(\bH_g^H\bH_g)^{\frac{1}{2}}
\\&\overset{(c)}{=}\mathbf{D}_g,
\end{aligned}
$$
i.e., \eqref{constraint:linear_multioperator} is satisfied, where (a) is due to $\bU(\bH_g)^H\bH_g= \mathbf{E}_{L-1}(\bH_g^H\bH_g)^{\frac{1}{2}}$, (b) uses the definition that $\bU(\mathbf{D}_g)\mathbf{E}_{L-1}=\bD_g(\mathbf{D}_g^H\mathbf{D}_g)^{-\frac{1}{2}}$, and (c) follows from $\bH_g^H\bH_g=\bD_g^H\bD_g$.  This completes the proof. 

\section{Proof of Theorem 3}
{
For the case that $G_s\geq L$, define 
\begin{equation*}\label{def:alpha_beta_2}
\begin{aligned}
\boldsymbol{\alpha}_{1,g}&=(\bD_g^H\bD_g)^{-\frac{1}{2}}\bD_g^H\h_{RI,g},~\beta_{1,g}=\|\h_{RI,g}\|^2-\|\boldsymbol{\alpha}_{1,g}\|^2,\\
\boldsymbol{\alpha}_{2,g}&=(\bH_g^H\bH_g)^{-\frac{1}{2}}\bH_g^H\h_{IT_1,g},~{\beta}_{2,g}=\|\h_{IT_1,g}\|^2-\|\boldsymbol{\alpha}_{2,g}\|^2.
\end{aligned}
\end{equation*}
Analogous to Lemma \ref{lemma1},  we can prove that the above random variables satisfy
 \begin{equation}\label{multioperator:alpha}\tag{38}
\begin{aligned}
&\boldsymbol{\alpha}_{1,g}\sim\mathcal{CN}(0,\rho_{RI}\mathbf{I}_{L-1}),~~{\beta}_{1,g}\sim\frac{\rho_{RI}}{2}\chi^2_{2(G_s-L+1)},\\
&\boldsymbol{\alpha}_{2,g}\sim\mathcal{CN}(0,\rho_{IT_1}\mathbf{I}_{L-1}),~{\beta}_{2,g}\sim\frac{\rho_{IT_1}}{2}\chi^2_{2(G_s-L+1)}, 
\end{aligned}
\end{equation}
all of which are independent.  According to \eqref{group:multioperator}, $\mathbb{E}[P_R^\star]$ can be expressed as 
\begin{equation*}
\begin{aligned}
\hspace{-0.1cm}\mathbb{E}[P_R^\star]=&\mathbb{E}\left[\left(\bigg|\sum_{g=1}^G\boldsymbol{\alpha}_{1,g}^H\boldsymbol{\alpha}_{2,g}\bigg|+\sum_{g=1}^G\beta_{1,g}^{\frac{1}{2}}\beta_{2,g}^{\frac{1}{2}}\right)^2\right]\\
=&\mathbb{E}\left[\bigg|\sum_{g=1}^G\boldsymbol{\alpha}_{1,g}^H\boldsymbol{\alpha}_{2,g}\bigg|^2\right]+\mathbb{E}\left[\left(\sum_{g=1}^G\beta_{1,g}^{\frac{1}{2}}\beta_{2,g}^{\frac{1}{2}}\right)^2\right]\\
&+2\mathbb{E}\left[\bigg|\sum_{g=1}^G\boldsymbol{\alpha}_{1,g}^H\boldsymbol{\alpha}_{2,g}\bigg|\sum_{g=1}^G\beta_{1,g}^{\frac{1}{2}}\beta_{2,g}^{\frac{1}{2}}\right]\\
:=&T_1+T_2+T_3.
\end{aligned}
\end{equation*}
Using the distributions in \eqref{multioperator:alpha} and following techniques analogous to those in Appendix  A, the three terms can be expressed and computed  as follows.
The first term is  
$$
\begin{aligned}
T_1&=\sum_{g=1}^G\mathbb{E}\left[|\boldsymbol{\alpha}_{1,g}^H\boldsymbol{\alpha}_{2,g}|^2\right]\\
&=\sum_{g=1}^G\mathbb{E}_{\boldsymbol{\alpha}_{1,g}}\mathbb{E}[|\boldsymbol{\alpha}_{1,g}^H\boldsymbol{\alpha}_{2,g}|^2\mid\boldsymbol{\alpha}_{1,g}]\\
&=\rho_{IT_1}\sum_{g=1}^G\mathbb{E}_{\boldsymbol{\alpha}_{1,g}}[\|\boldsymbol{\alpha}_{1,g}\|^2]=G\rho_{RI}
\rho_{IT_1}(L-1).
\end{aligned}
$$
The second term satisfies 
$$
\begin{aligned}
&T_2\hspace{-0.02cm}=\hspace{-0.05cm}\sum_{g=1}^G\mathbb{E}[\beta_{1,g}]\mathbb{E}[\beta_{2,g}]+\hspace{-0.2cm}\sum_{g_1\neq g_2}\hspace{-0.1cm}\mathbb{E}[\beta_{1,g_1}^{\frac{1}{2}}]\mathbb{E}[\beta_{2,g_1}^{\frac{1}{2}}]\mathbb{E}[\beta_{1,g_2}^{\frac{1}{2}}]\mathbb{E}[\beta_{2,g_2}^{\frac{1}{2}}]\\
&=\rho_{RI}\rho_{IT_1}\hspace{-0.1cm}\left(\hspace{-0.05cm}G(G_s-L+1)^2\hspace{-0.05cm}+G(G-1)\left(\frac{\Gamma(G_s-L+\frac{3}{2})}{\Gamma(G_s-L+1)}\right)^{\hspace{-0.05cm}4}\right),
\end{aligned}$$
where we apply \eqref{multioperator:alpha} and the properties of the chi-square distribution in \eqref{chiproperty}.
The last term is given by 
\begin{equation*}
\begin{aligned}
T_3&\hspace{-0.05cm}=2\mathbb{E}\left[\bigg|\sum_{g=1}^G\boldsymbol{\alpha}_{1,g}^H\boldsymbol{\alpha}_{2,g}\bigg|\right]\left(\sum_{g=1}^G\mathbb{E}\left[\beta_{1,g}^{\frac{1}{2}}\beta_{2,g}^{\frac{1}{2}}\right]\right)\\
&\hspace{-0.05cm}=2\hspace{-0.05cm}\sqrt{\rho_{RI}\rho_{IT_1}}G\left(\frac{\Gamma(G_s-L+\frac{3}{2})}{\Gamma(G_s-L+1)}\right)^2\hspace{-0.1cm}\mathbb{E}\hspace{-0.05cm}\left[\bigg|\sum_{g=1}^G\boldsymbol{\alpha}_{1,g}^H\boldsymbol{\alpha}_{2,g}\bigg|\right].
\end{aligned}
\end{equation*}
Let $\boldsymbol{\alpha}_1=[\boldsymbol{\alpha}_{1,1}^T,\dots,\boldsymbol{\alpha}_{1,G}^T]^T\sim\mathcal{CN}(\mathbf{0},\rho_{RI}\mathbf{I}_{(L-1)G})$ and $\boldsymbol{\alpha}_2=[\boldsymbol{\alpha}_{2,1}^T,\dots,\boldsymbol{\alpha}_{2,G}^T]^T\sim\mathcal{CN}(\mathbf{0},\rho_{IT_1}\mathbf{I}_{(L-1)G})$. Since $\boldsymbol{\alpha}_1$ and $\boldsymbol{\alpha}_2$ are independent, we have $\boldsymbol{\alpha}_{1}^H\boldsymbol{\alpha}_{2}\mid\boldsymbol{\alpha}_1\sim\mathcal{CN}(0,\rho_{IT_1}\|\boldsymbol{\alpha}_1\|^2)$. Similar to \eqref{def:alpha_1alpha_2}, we get
$$\begin{aligned}
\mathbb{E}\hspace{-0.05cm}\left[|\boldsymbol{\alpha}_{1}^H\boldsymbol{\alpha}_{2}|\right]&=\mathbb{E}_{\boldsymbol{\alpha}_1}\mathbb{E}_{\boldsymbol{\alpha}_2}\left[|\boldsymbol{\alpha}_{1}^H\boldsymbol{\alpha}_{2}|\mid\boldsymbol{\alpha}_1\right]\\
&=\frac{\sqrt{\pi}}{2}\sqrt{\rho_{IT_1}}\mathbb{E}_{\boldsymbol{\alpha}_1}[\|\boldsymbol{\alpha}_1\|]\\&
=\frac{\sqrt{\pi}}{2}\sqrt{\rho_{RI}\rho_{IT_1}}\frac{\Gamma(G(L-1)+\frac{1}{2})}{\Gamma(G(L-1))}.
\end{aligned}
$$
Combining the above yields Theorem \ref{scalinglaw3} (i).

For the case $G_s< L$,  we have $$\bar{\h}_{RI,g}:=(\bH_g\bH_g^H)^{-1}\bH_g\bD_g^H\h_{RI,g}\sim\mathcal{CN}(\mathbf{0},\rho_{RI}\mathbf{I}_{G_s}),$$
since $\bH_g$ and $\bD_g$ are independent of $\h_{RI,g}$, and satisfy $\bH_g^H\bH_g=\bD_g^H\bD_g$.  It follows that 
$$\bar{\h}_{RI}:=[\bar{\h}_{RI,1}^T,\bar{\h}_{RI,2}^T,\dots,\bar{\h}_{RI,G}^T]^T\sim\mathcal{CN}(\mathbf{0},\rho_{RI}\mathbf{I}_N).$$ According to \eqref{multi:Gs<L}, 
$$\mathbb{E}[P_R^\star]=\mathbb{E}\left[\left|\sum_{g=1}^G\bar{\h}_{RI,g}^H\h_{IT_1,g}\right|^2\right]=\mathbb{E}\left[|\bar{\h}_{RI}^H\h_{IT_1}|^2\right].$$
Similar to \eqref{proof:single}, we get 
$$\mathbb{E}[P_R^\star]=N\rho_{RI}\rho_{IT_1}.$$
{

  \bibliographystyle{IEEEtran}
\bibliography{IEEEabrv,BDRIS}
 \end{document}